\newcommand{\diag}{\mathrm{diag}}
\newcommand{\D}{\mathfrak{D}}
\newcommand{\E}{\mathrm{E}}
\newcommand{\dis}{\mathrm{dis}}
\newcommand{\degree}{\mathrm{deg}}
\theoremstyle{plain}
\newtheorem{lemma}{Lemma}
\newtheorem{coro}{Corollary}
\newtheorem{thm}{Theorem}
\theoremstyle{definition}
\newtheorem{definition}{Definition}
\theoremstyle{remark}
\newtheorem{remark}{Remark}
\begin{document}

\title{Containment Control of Multi-Agent Systems with Dynamic Leaders Based on a $PI^n$-Type Approach}
\author{Long Cheng, Yunpeng Wang, Wei Ren, Zeng-Guang Hou, and Min Tan
\thanks{Long Cheng, Yunpeng Wang, Zeng-Guang Hou and Min Tan are with the State Key Laboratory of Management and Control for Complex Systems, Institute of Automation, Chinese Academy of Sciences, Beijing 100190, China.}
\thanks{Wei Ren is with the Department of Electrical and Computer Engineering, University of California at Riverside, CA 92521, USA.}
\thanks{Please address all correspondences to Dr. Long Cheng at Email: chenglong@compsys.ia.ac.cn; Tel: 8610-82544522; Fax: 8610-82544794.}}

\maketitle

\begin{abstract}
  This paper studies the containment control of multi-agent systems with multiple dynamic leaders in {both the continuous-time domain and the discrete-time domain}.
  The leaders' motions are described by the $n$th-order polynomial trajectories. This setting makes practical sense because given some critical points, the leaders' trajectories are usually planned by the polynomial interpolations.
  In order to drive all followers into the convex hull spanned by the leaders, a $PI^n$-type containment algorithm is proposed ($P$ and $I$ are short for {\it Proportional} and {\it  Integral} , respectively; $I^n$ implies that the algorithm includes {up to the $n$th-order} integral terms).
  It is theoretically proved that the $PI^n$-type containment algorithm is able to solve the containment problem of multi-agent systems where the followers are described by any order integral dynamics. Compared to the previous results on the multi-agent systems with dynamic leaders, the distinguished features of this paper are that: (1) the containment problem is studied not only in the continuous-time domain but also in the discrete-time domain while most existing results only work in the continuous-time domain; (2) to deal with the leaders with the $n$th-order polynomial trajectories, existing results require the follower's dynamics to be the $(n+1)$th-order integral while the followers considered in this paper can be described by any-order integral dynamics; (3) the ``sign'' function is not employed in the proposed algorithm, which avoids the chattering phenomenon; and {(4) both the disturbance and the measurement noise are taken into account.} Finally, some simulation examples are given to demonstrate the effectiveness of the proposed algorithm.
\end{abstract}

\begin{IEEEkeywords}
  Containment control, multi-agent system, $PI^n$-type algorithm, polynomial trajectory.
\end{IEEEkeywords}

\IEEEpeerreviewmaketitle

\section{Introduction}

Recently, the distributed coordinated control of multi-agent systems (MASs) has become a research focal in the systems and control community. Roughly speaking, agents in concern can be divided into two categories: leaders and followers. Depending on whether there are leaders in MASs, the coordinated control problem becomes the consensus problem (leaderless case) \cite{Hu15TCYB,Li15TCYB,Hou09SMCB}; the leader-following problem (single leader case) \cite{Cheng10TNN}; and the containment problem (multiple leaders case).
This paper mainly focuses on the containment problem of MASs because, from one side, the containment problem roots in some natural phenomena such as the relationship between sheepdogs and sheep \cite{Vaugha00RAS} and the relationship between female silkworm moths and male silkworm moths \cite{Haque08ACC};
from the other side, the containment problem has many practical applications such as the mixed containment-sensing problem \cite{Galbusera13SCL} and the coordinated control of a group of mobile robots  \cite{Parker03RS,Cao11TCST,Wang15CCC}.

Looking back at the history of the containment problem, the rapid development started after the publication of \cite{Dimarogonas06CDC,Ferrari06WHS}.
A leader-based containment control strategy for multiple unicycle agents was introduced in \cite{Dimarogonas06CDC}, where the containment problem was interpreted as a combination of the formation and agreement control problems.
The leaders were convergent to a desired formation while the followers converged to the convex hull spanned by the leaders.
A similar containment problem of MASs with single-integrator dynamics was studied in \cite{Ji08TAC}, where consensus-like interaction rules were designed for the followers while a hybrid ``Stop-Go'' policy was applied to the leaders.
Since then, a great number of results concerning the containment control have been reported.
According to the type of the agent's dynamics, these results can be divided into four categories: (1) single-integrator dynamics \cite{Notarstefano11Automatica,Cao12Automatica,Tang12AAA}; (2) double-integrator dynamics \cite{Wang14TCYB,Liu12Automatica,Li14NAHS,Cao11TCST,Li12TAC,Zhang14SCL,Lou12Automatica,Wang14Automatica}; (3) general linear dynamics \cite{Liu13SCL,Ma14Neuro,Li13JRNC,Liu15Automatica}; (4) Euler-Lagrange dynamics \cite{Mei12Automatica,Yoo14ES,Meng10Automatica}; and (5) nonlinear dynamics \cite{XKWang14Cyb}.

In \cite{Notarstefano11Automatica}, the containment problem of MASs with undirected switching communication topologies was studied.
However, in practice, the communication link is usually a one-way channel.
For this reason, the containment problem of MASs with directed communication topologies has been widely investigated recently.
In \cite{Cao12Automatica}, it was shown that the necessary and sufficient condition for achieving the containment of single-integrator MASs with a directed topology was that for each follower, there existed at least one leader that had a directed path to this follower.
This condition was also proved to be necessary and sufficient for the containment problem of double-integrator MASs in \cite{Liu12Automatica,Li14NAHS}.
Experimental validations on a team of mobile robots were conducted in \cite{Cao11TCST}. The finite-time containment problem of double-integrator MASs was investigated in \cite{Wang14TCYB}.
In \cite{Li12TAC,Zhang14SCL}, containment control algorithms were proposed for double-integrator MASs based on only position measurements.
The containment problem of double-integrator MASs with randomly switching topologies was investigated in \cite{Lou12Automatica}, where the switching signal was described by a continuous-time irreducible Markov chain.
It was proved that the containment problems could be solved if and only if for each follower, there existed at least one leader which had a directed path to this follower in the union graph of all possible communication graphs.
Because the communication noise is unavoidable in practical applications, the noise effect in containment problems of single-integrator and double-integrator MASs were studied in \cite{Tang12AAA} and \cite{Wang14Automatica}, respectively.
The results in \cite{Liu13SCL} show that the containment control of general linear MASs can be achieved by applying a state-feedback algorithm.
When the agents' states were unavailable, output-feedback based containment control algorithms were proposed for general linear MASs in \cite{Ma14Neuro,Li13JRNC}. In \cite{Liu15Automatica}, the communication constraint such as the non-uniform delay was considered in studying the containment problem. 
For Euler-Lagrange MASs with uncertainties, adaptive containment algorithms were proposed based on sliding-mode estimators and neural networks in \cite{Mei12Automatica} and \cite{Yoo14ES}, respectively.
Furthermore, the finite-time containment problem of Euler-Lagrange MASs was studied in \cite{Meng10Automatica}. Finally, in \cite{XKWang14Cyb}, the containment problem of the second-order locally Lipschitz nonlinear MASs was solved within the framework of the nonlinear input-to-state stability.

Although great effort has been made to address various factors in the containment control of MASs, there are still some limitations in the existing results.
Let us first consider an application scenario shown in Fig. \ref{fig:back}.
In this application, a group of mobile robots are required to move across a partially unknown area through a narrow safe tunnel.
There are two kinds of robots: the master robots and the slave robots.
The master robots are capable of self-navigation, while the slave robots can only measure the relative positions with its neighbor robots.
This task can be solved by the containment control strategy:
\begin{enumerate}
	\item Master robots act as leaders. For each master robot, design a reference trajectory which is inside the safe channel. Let each master robot move along its corresponding reference trajectory.
	\item Slave robots act as followers. Let all slave robots move into the area surrounded by master robots and move together with master robots.
\end{enumerate}
Then the challenge is how to design the reference trajectories for master robots.
For each master robot, we can select a sequence of suitable reference points inside the safe tunnel.
A polynomial trajectory, which goes through these points, can be constructed by the polynomial interpolation. The obtained trajectory is checked whether it is inside the safe tunnel. If not (caused by the Runge's phenomenon), we need to re-select the reference points and construct the new polynomial trajectory. This process is repeated until the trajectory is within the safe tunnel. Then the following $n$th-order polynomial can be determined
\begin{equation}\label{eq:intro_1}
x(t)=a_0+a_1t+\cdots+a_{n}t^{n},
\end{equation}
whose trajectory goes through the selected $(n+1)$ reference points ($a_0,\cdots,a_n$ are coefficients determined by these reference points).
\begin{figure}
	\centering
	\includegraphics[width=0.9\hsize]{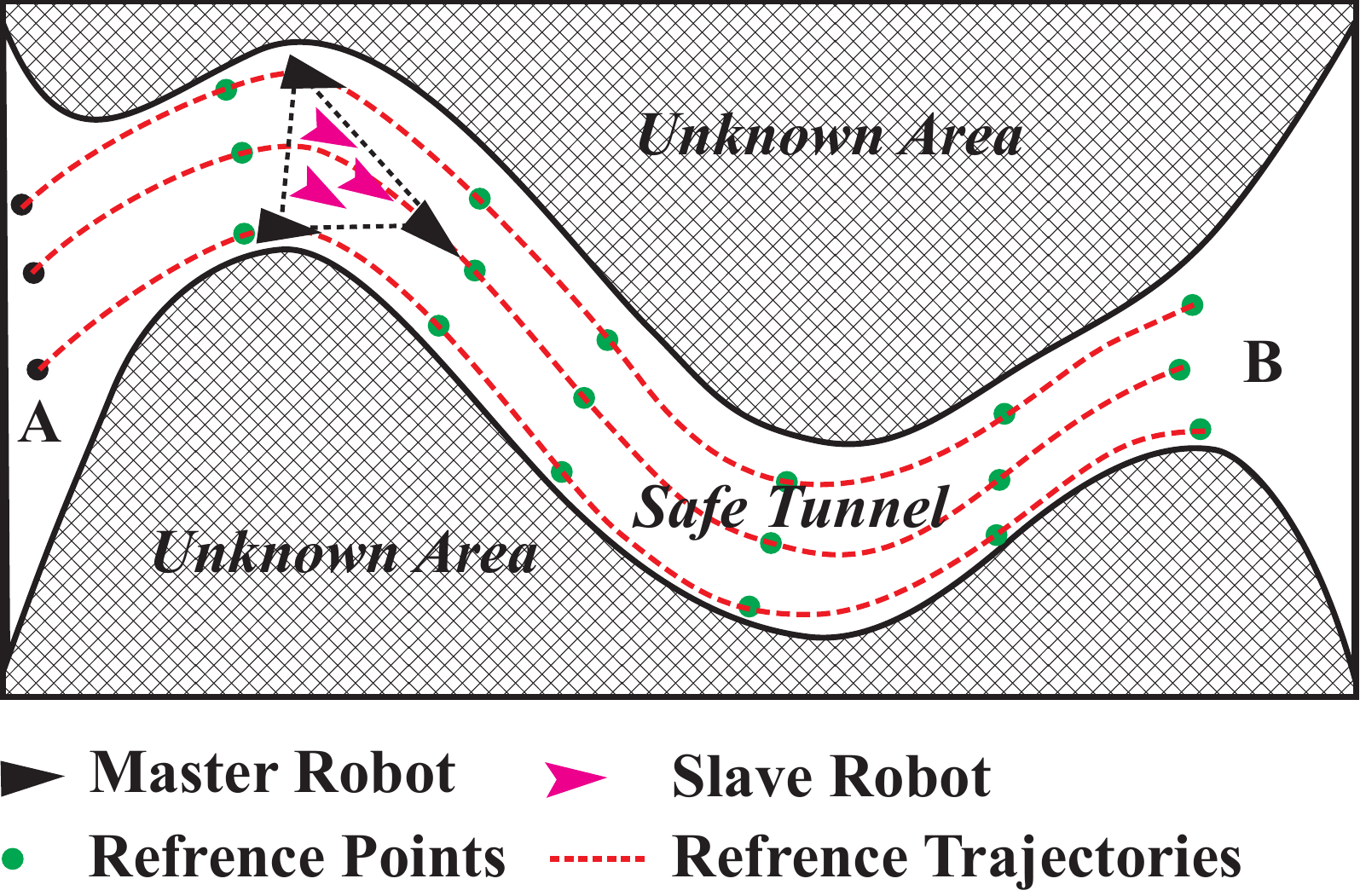}\\
	\caption{A group of mobile robots travel across a partially unknown area.}\label{fig:back}
\end{figure}
The polynomial trajectory can be used as the reference trajectory for the master robot.

\color{black}
In \cite{Notarstefano11Automatica,Cao12Automatica,Tang12AAA,Liu12Automatica,Li14NAHS,Cao11TCST,Li12TAC,Zhang14SCL,Lou12Automatica,Wang14Automatica,Mei12Automatica,Meng10Automatica}, every leader has the following single-integrator or double-integrator dynamics:
\begin{equation}\label{eq:intro_2}
\dot x(t)=u(t)\ \quad \text{or}\quad \
\begin{cases}
\dot x(t)=v(t)\\ \dot v(t)=u(t)
\end{cases},
\end{equation}
where the control input $u(t)$ is assumed to be bounded or even zero.
Obviously,  {if the control input of \eqref{eq:intro_2} is bounded, then this controller cannot generate the polynomial trajectory defined by \eqref{eq:intro_1} with the order $n\ge 3$}.
If the leader has the $n$th-order integrator or the $n$-dimensional linear dynamics \cite{Liu13SCL,Ma14Neuro,Li13JRNC}, then the polynomial trajectory can be generated by properly selecting the parameters in the system matrix of the leader's dynamics.
However, by doing so, it is required that all followers have the same dynamics as the leader \cite{Liu13SCL,Ma14Neuro,Li13JRNC}.
In reality, the follower's dynamics has no relationship with the order of the leader's polynomial trajectory.
In fact, under the $n$th-order polynomial trajectories for leaders, how to solve the containment problem of MASs with single-integrator followers is not answered yet. In addition, from the controller design point of view, most containment algorithms include the non-smooth ``sign'' function to deal with the dynamic leaders \cite{Cao12Automatica,Cao11TCST,Li12TAC,Zhang14SCL,Mei12Automatica,Meng10Automatica}, which would cause the harmful ``chattering'' phenomenon.
Furthermore, all results aforementioned above only study the containment problem in the continuous-time domain.
The counterpart results in the discrete-time domain are not clear.

Inspired by the above observations, this paper investigates the containment problem of MASs with dynamic leaders in both the  {continuous-time domain and the discrete-time domain}.
It is assumed that each leader's motion is described by a corresponding polynomial trajectory.
Every follower is first assumed to have the single-integrator dynamics.
A so-called $PI^n$-type ($P$ and $I$ are short for {\it Proportional} and {\it  Integral}, respectively) algorithm is proposed to solve the containment problem, where $I^n$ implies that the algorithm includes  {up-to the $n$th-order} integral terms.
It turns out that the $PI^n$-type algorithm is able to solve the containment problem if for each follower, there exists at least one leader which has a directed path to this follower.
Then the obtained results are extended to the case where the followers are described by the high-order integral dynamics.
In this case, the  $PI^n$-type algorithm is modified as a $PI^{n-m}D^m$-type algorithm ($D$ is short for  {\it derivative}; and $D^m$ implies that the algorithm includes {up to the $m$th-order} differential terms).
 {Then, the counterpart results in the discrete-time domain are presented. Moreover, effects of the disturbance and the measurement noise are also taken into account in this paper.}
 {Compared to the previous results, the contributions of this paper can be summarized as follows:
\begin{enumerate}
	\item The proposed algorithms can solve the containment problem with dynamic leaders in both the continuous-time domain and the discrete-time domain.
	\item The follower can be described by any-order integral dynamics.
	\item There is no discontinuous ``sign'' function in the proposed controllers, which avoids the ``chattering'' phenomenon.
	\item Effects of the disturbance and the measurement noise are taken into account.
\end{enumerate}}
It is noted that there are some recent results on the ``$PI$''-type consensus algorithms for MASs \cite{Andreasson14TAC,Burbano15TAC}. This kind of consensus algorithms has also been applied in the distributed filter of distributed parameter systems \cite{Demetriou14SCL}.
Compared to \cite{Andreasson14TAC,Burbano15TAC}, the distinguished features of this paper mainly lie in the following five aspects:
\begin{enumerate}
	\item The control objective of \cite{Andreasson14TAC,Burbano15TAC} is to solve the leaderless consensus problem of multi-agent systems with disturbances. The aim of this paper is to solve the containment problem with dynamic leaders.
	
	\item In \cite{Andreasson14TAC,Burbano15TAC}, the agent is described by the single-integrator or double-integrator dynamics. In this paper, the leaders move along the polynomial trajectories, and the followers can be described by any-order integral dynamics.
	
	\item In \cite{Andreasson14TAC,Burbano15TAC}, the main role of the integral term is to attenuate constant disturbances.
	In this paper, the main purpose of employing the integral-terms is to eliminate the containment error caused by the polynomial trajectory.
	
	\item The algorithms proposed in \cite{Andreasson14TAC,Burbano15TAC} can only deal with constant disturbances, while the ones proposed in this paper can attenuate some kinds of time-varying disturbances (polynomial-type disturbance).
	
	\item In \cite{Andreasson14TAC,Burbano15TAC}, only results in the continuous-time domain are presented. This paper studies the containment problem of MASs in both the continuous-time domain and the discrete-time domain.
\end{enumerate}

\color{black}
The remainder of this paper is organized as follows: Section \ref{sec:pre} gives some preliminary results on the containment problem with dynamic leaders;  {Section \ref{sec:dis} presents a containment algorithm  and the related theoretical analysis in the continuous-time domain, where the followers are described by the single-integrator dynamics};
Section \ref{sec:highorder} discusses how to generalize the obtained results to the case where the followers are described by the high-order integral dynamics; counterpart results in the discrete-time domain are presented in Section \ref{sec:ext}; Section \ref{sec:conclude} concludes this paper with final remarks.

\noindent \textbf{Notations}:  {$\textbf{1}_n = (1,\cdots,1)^T \in \mathbb{R}^n$; $\textbf{0}_n = (0,\cdots,0)^T \in
\mathbb{R}^n$}; $I_n$ denotes the $n\times n$ dimensional identity
matrix;  {$\textbf{0}_{m\times n}\in {\mathbb R}^{m\times n}$ denotes the $m\times n$ dimensional zero matrix};
$\otimes$ denotes the Kronecker product.
$\mathbb{N}$ and ${\mathbb N}^+$ denote the set of natural numbers and the set of positive natural numbers, respectively. For a given vector $p \in \mathbb{R}^n$ and a set $\Omega \subseteq \mathbb{R}^n$, the distance between $p$ and $\Omega$ is defined as $\dis(p,\Omega)=\inf_{y\in\Omega}\|p-y\|_2$.
For a given matrix $X$, $\|X\|_2$ denotes its 2-norm;  {$\|X\|_F$ denotes its Frobenius norm; $X^T$ denotes its transpose; and $X^H$ denotes its conjugate transpose}. diag$(\cdot)$ denotes a block diagonal matrix formed by its inputs.
For a complex number $c$, $\Re(c)$ denotes its real part.
 {For a given random variable or vector $x$, $\E(x)$ denotes its mathematical expectation.} 

\section{Preliminaries \& Problem Formulation}\label{sec:pre}
Consider a MAS composed of $M+N$ agents.
 {Define two sets ${\cal M}=\{1,2,\cdots,M\}$ and ${\cal N}=\{M+1,M+2,\cdots,M+N\}$. Motivated by \cite{Cao12Automatica,Tang12AAA,Liu12Automatica,Li14NAHS,Lou12Automatica,Wang14Automatica,Liu13SCL,Wang14Cyb,Meng11SMCB}, the interaction topology of the MAS is modeled by a weighted digraph $\cal G=\{V_G,E_G,A_G\}$, where ${\cal V_G}=\{v_1,v_2,\cdots,v_{M+N}\}$, ${\cal E_G}\subset{\cal V_G}\times{\cal V_G}=\{\epsilon_{ij}|i,j\in {\cal M\cup N}\}$ and ${\cal A_G}=[\alpha_{ij}]\in{\mathbb R}^{(M+N)\times (M+N)}$ are the node set, the directed edge set and the adjacency matrix, respectively.}
Node $v_i$ denotes agent $i$; $\epsilon_{ij}\in{\cal E_G}$ means that there is an information flow from agent $i$ to agent $j$; $\alpha_{ji}$ denotes the weight associated with the directed edge $\epsilon_{ij}$.
The element of $\cal A_G$ satisfies that $\alpha_{ji}>0 \Leftrightarrow \epsilon_{ij}\in{\cal E_G}$  and  $\alpha_{ji}=0 \Leftrightarrow \epsilon_{ij}\notin{\cal E_G}$.
It is assumed that there is no self-loop in  ${\cal V_G}$ ($\epsilon_{ii}\notin {\cal E_G}$ and $\alpha_{ii}=0$, $i\in{\cal M \cup \cal N}$).
If $\epsilon_{ij}\in{\cal E_G}$, then agent $i$ is called the parent of agent $j$.
The neighborhood of node $v_i$ is defined as ${\cal N}_i=\{v_j|\epsilon_{ji}\in{\cal E_G}\}$.
The in-degree of node $v_i$ is defined as $\degree_{in}(v_i)=\sum_{j\in{\cal N}_i}\alpha_{ij}$.
The Laplacian matrix of $\cal G$ is defined as $\cal L_G=D_G-A_G$, where ${\cal D_G}=\diag(\deg_{in}(v_1),\deg_{in}(v_2),\cdots,\deg_{in}(v_{M+N}))$.
A directed path from node $v_{i_1}$ to node $v_{i_n}$ is a sequence of end-to-end directed edge $\epsilon_{i_1i_2}, \epsilon_{i_2i_3},\cdots,\epsilon_{i_{n-1}i_n}$ where $\epsilon_{i_ji_{j+1}}\in{\cal V_G}\ (j=1,\cdots,n-1)$.

In this paper,  {an agent is called a leader if it has no parent; otherwise it is called a follower.}
Without loss of generality, it is assumed that the agents labeled from $1$ to $M$ are the leaders while the agents labeled from $M+1$ to $M+N$ are the followers.
Hence the Laplacian matrix of the interaction topology graph has the following form
\begin{equation}\label{eq:4}
  {\cal L_G}=
  \begin{bmatrix}
    \textbf{0}_{M\times M} & \textbf{0}_{M\times N}\\
    L_1 & L_2
  \end{bmatrix},
\end{equation}
where $L_1\in{\mathbb R}^{N\times M}$ and $L_2\in{\mathbb R}^{N\times N}$.

Throughout this paper, it is assumed that the following two assumptions hold.
\begin{description}
	\item[(A1)] For each follower, there exists at least one leader that has a directed path to this follower.
	\item[(A2)] Each follower can only measure the relative positions between itself and its neighbors.
\end{description}
\begin{lemma}[\cite{Meng10Automatica}]\label{lem:Laplacian}
  Under Assumption (A1),
  \begin{itemize}
  	\item  all eigenvalues of $L_2$ defined in \eqref{eq:4} have positive real parts;
  	\item  each entry of $-L_2^{-1}L_1$ is nonnegative and the row sum of $-L_2^{-1}L_1$ equals to one.
  \end{itemize}
\end{lemma}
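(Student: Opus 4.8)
The plan is to prove that $L_2$ is a nonsingular M-matrix, from which both bullet points fall out with little extra work. I would begin by recording the structure $L_2$ inherits from $\mathcal{L}_G$: its off-diagonal entries are $-\alpha_{ij}\le 0$, its diagonal entries are $\deg_{in}(v_i)\ge 0$, and—because the rows of $\mathcal{L}_G$ sum to zero while $L_1$ gathers exactly the nonpositive leader-to-follower weights—the $i$th row sum of $L_2$ equals $\sum_{j\in\mathcal{M}}\alpha_{ij}\ge 0$. Thus $L_2$ is a diagonally dominant $Z$-matrix whose dominance is strict precisely at the followers that receive an edge directly from a leader. Applying the Gershgorin disc theorem, every eigenvalue $\lambda$ of $L_2$ lies in a disc centred at some $\deg_{in}(v_i)$ with radius $\sum_{j\in\mathcal{N}}\alpha_{ij}=\deg_{in}(v_i)-\sum_{j\in\mathcal{M}}\alpha_{ij}$, whence $\Re(\lambda)\ge\sum_{j\in\mathcal{M}}\alpha_{ij}\ge 0$; moreover, since each such disc is centred on the nonnegative real axis with radius no larger than its centre, it can meet the imaginary axis only at the origin, so any eigenvalue with zero real part must equal $0$.

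The crux—and the only place Assumption (A1) is genuinely needed—is to rule out $0$ as an eigenvalue, i.e.\ to prove $L_2$ nonsingular. I would argue by maximum modulus: suppose $L_2 x=\textbf{0}_N$ with $x\ne\textbf{0}_N$ and let $S$ be the set of indices attaining $\max_i|x_i|$. Expanding the $i^*$th equation for any $i^*\in S$ and invoking the triangle inequality together with the diagonal dominance forces equality throughout, which has two consequences: the $i^*$th row sum must vanish (so $i^*$ receives no edge from a leader), and every follower $j$ with $\alpha_{i^*j}>0$ again lies in $S$. Hence $S$ is closed under passing to information-flow predecessors within the follower set and contains no leader-connected follower. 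But (A1) supplies, for $i^*\in S$, a directed path $\ell\to v_1\to\cdots\to v_k=i^*$ from some leader $\ell$; closure pulls $v_1$ into $S$, while the edge $\ell\to v_1$ makes $v_1$ leader-connected—a contradiction. Therefore $x=\textbf{0}_N$, $L_2$ is invertible, and combined with the previous paragraph every eigenvalue of $L_2$ satisfies $\Re(\lambda)>0$; equivalently, $L_2$ is a nonsingular M-matrix.

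The two stated properties then follow directly. Since $L_2$ is a nonsingular M-matrix I may write $L_2=sI_N-B$ with $B\ge 0$ and $s>\rho(B)$, so the Neumann series $L_2^{-1}=s^{-1}\sum_{k\ge 0}(B/s)^k$ converges to an entrywise nonnegative matrix; as $L_1\le 0$ entrywise, $-L_2^{-1}L_1=L_2^{-1}(-L_1)$ is a product of nonnegative matrices and is therefore nonnegative. For the row sums I would use the zero-row-sum identity of $\mathcal{L}_G$ one last time: the follower block gives $L_1\textbf{1}_M+L_2\textbf{1}_N=\textbf{0}_N$, which rearranges to $(-L_2^{-1}L_1)\textbf{1}_M=\textbf{1}_N$, i.e.\ every row of $-L_2^{-1}L_1$ sums to one. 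I expect the nonsingularity step of the second paragraph to be the sole real obstacle; everything else is bookkeeping with the Laplacian's zero row sums and elementary facts about nonnegative matrices.
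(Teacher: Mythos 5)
Your proof is correct. Note, however, that the paper itself offers no proof of this lemma: it is imported verbatim, by citation, from \cite{Meng10Automatica}, so there is no ``paper route'' to compare against line by line---what you have produced is a self-contained replacement for that citation. Your argument is the standard one in spirit (grounded-Laplacian/M-matrix), and every step checks out: the row-sum bookkeeping giving $\sum_{j\in\mathcal{M}}\alpha_{ij}\ge 0$ as the $i$th row sum of $L_2$, the Gershgorin localization $\Re(\lambda)\ge 0$ with $0$ as the only possible boundary point, the maximum-modulus argument in which equality in the triangle inequality simultaneously kills the leader edges at a maximizing index and propagates membership in $S$ to all follower in-neighbors, and the use of (A1) plus the fact that no edge can enter a leader to pull a leader-connected follower into $S$ and reach a contradiction. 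The only place where you lean on an unproved standard fact is the passage from ``Z-matrix with spectrum in the open right half-plane'' to ``$L_2=sI_N-B$ with $B\ge 0$ and $s>\rho(B)$''; this deserves one line, since it is exactly Perron--Frobenius: taking $s=\max_i (L_2)_{ii}$ gives $B=sI_N-L_2\ge 0$, and $\rho(B)$ is then itself an eigenvalue of $B$, so $s-\rho(B)$ is a (real) eigenvalue of $L_2$, which you have already shown is positive, i.e.\ $s>\rho(B)$. With that one-liner inserted, the Neumann-series nonnegativity of $L_2^{-1}$, the sign of $-L_2^{-1}L_1$, and the row-sum identity $(-L_2^{-1}L_1)\textbf{1}_M=\textbf{1}_N$ all follow exactly as you wrote them.
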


The control objective is to design the control algorithms for the followers such that all followers are convergent into the convex hull spanned by the leaders (containment problem) while the leaders move along some predesigned trajectories. To this end, how to describe the motions of the leaders and followers should be given.
Let us first consider the  {continuous-time} domain case. The position of agent $i$ at time  $t$ is denoted by  {$x_i(t)\in{\mathbb R}^p$} (it is assumed that the agent moves in the $p$-dimensional space).
The $i$th leader's motion is assumed to move along the following  {$n$th-order} polynomial trajectory
\begin{equation}\label{eq:16}
 {x_i(t)={\bf a}^i_0+{\bf a}^i_1t+\cdots+{\bf a}^i_{n}t^{n}, \quad i\in{\cal M},}
\end{equation}
where ${\bf a}_j^i\in{\mathbb R^p}, \;j=0,\cdots,n$. The reason of employing the polynomial trajectory is that the robot's trajectory is usually planned by the polynomial interpolation.
In this paper, we do not care about the dynamics of the leaders.
The leaders can be considered as the reference signals.
The motion of the $(i-M)$-th follower ($i\in{\cal N}$) is described by the following first-order differential equation
\begin{equation}\label{eq:1}
 {\D x_i(t)=u_i(t)+\delta_i(t)},
\end{equation}
 {where $u_i(t)\in{\mathbb R}^p$ and $x_i(t)\in{\mathbb R}^p$.
In \eqref{eq:1}, the symbol $\D$ denotes the differential operator (namely $\D x_i(t)=\dot x_i(t)$ and $\D^{n}x(t)=\D(\D^{n-1}x(t))=x^{(n)}(t)$).
And $\delta_i(t)={\bf b}_0^i+{\bf b}_1^it+\cdots+{\bf b}_r^it^r$ is the polynomial-type disturbance, where $r\in{\mathbb N}$ and ${\bf b}_j^i\in{\mathbb R}^p\;(j=0,\cdots,r)$}.
The inversion of this  {differential} operator $\D$ is defined as  {$\D^{-1} x(t)=\int_{0}^{t}x(s)ds$}.
It is easy to see that $\D(\D^{-1}x(t))=x(t)$.

By the above terminologies, the containment problem can be formally defined as follows.
\begin{definition}\label{def:1}
	The containment problem of MASs is solved if all followers' positions are convergent into the convex hull spanned by the leaders' positions. That is
	\begin{equation}
	 {\lim_{t\to\infty}\dis(x_i(t),co_L(t))=0,\ i\in{\cal N},\nonumber}
	\end{equation}
	where $co_L(t)=co\{x_1(t),\cdots,x_M(t)\}=\{\sum_{i\in{\cal M}}\mu_ix_i(t)|\sum_{i\in{\cal M}}\mu_i=1,\ \mu_i\ge0\}$ is the convex hull spanned by the leaders' positions at time $t$.
\end{definition}
\
\begin{definition}
	The containment error of MASs as time $t$ is defined as $E_r(t)=\sum_{i\in{\cal N}}\dis(x_i(t),co_L(t))$.
\end{definition}
It is easy to see that the containment problem of MASs is solved if and only if $\lim_{t\to\infty}E_r(t)=0$.

\section{ {Containment Control of MASs in Continuous-Time Domain}}\label{sec:dis}
Let $e_{ji}(t) = x_j(t) - x_i(t)$ denote the relative state between agent $j$ and agent $i$.
The following containment controller is proposed for the $i$th agent
\begin{equation}\label{eq:11}
u_i(t)=\sum_{l=0}^{n}\kappa_l\D^{-l}\left(\sum_{j\in{\cal M}}\alpha_{ij}e_{ji}(t)+\sum_{j\in{\cal N}}\alpha_{ij}e_{ji}(t)\right),
\end{equation}
where $\{\kappa_l;\ l=0,\cdots,n\}$ are parameters to be determined.
Because \eqref{eq:11} includes the proportional term $\sum_{j\in{\cal M\cup N}}\alpha_{ij}e_{ji}(t)$ and up to the $n$th-order integral terms \{$\D^{-l}\sum_{j\in{\cal M\cup N}}\alpha_{ij}e_{ji}(t)$, $l=1,\cdots,n$\}, \eqref{eq:11} is called $PI^n$-type algorithm.
The main purpose of employing the integral terms is to eliminate the containment error caused by the polynomial trajectory.

\color{black}
Let $\xi_i(t)=(x_i^T(t),\D x_i^T(t),\cdots,{\D^{n}x_i}^T(t))^T$.
Then the $i$th agent's dynamical behavior can be described by the following differential equation
 {
\begin{equation}\label{eq:3}
\begin{cases}
\D\xi_i(t)=(A\otimes I_p)\xi_i(t), & i\in{\cal M},\\
\D\xi_i(t)=(A\otimes I_p)\xi_i(t)+(B\otimes I_p) \bar u_i(t)\\
\quad\quad\quad\quad+\D^{n}\delta_i(t), & i\in{\cal N},
\end{cases}
\end{equation}}
where $\bar u_i(t)\triangleq\D^{n} u_i(t)=\sum_{j\in{\cal M\cup N}}K(\xi_j(t)-\xi_i(t)$, $K=(\kappa_n,\cdots,\kappa_0)$, $B=(0,\cdots,0,1)^T\in{\mathbb R}^{n+1}$, and
\begin{equation}
A=
\begin{bmatrix}
0 & 1 & \cdots & 0\\
\vdots & \vdots & \ddots & \vdots\\
0 & 0 & \cdots & 1\\
0 & 0 & \cdots & 0
\end{bmatrix}\in {\mathbb R}^{(n+1)\times (n+1)}.\nonumber
\end{equation}
Let $\Xi_L(t)=(\xi_1^T(t),\cdots,\xi_{M}^T(t))^T$ and $\Xi_F(t)=(\xi_{M+1}^T(t),$ $\cdots,\xi_{M+N}^T(t))^T$. Then the closed-loop dynamics of the MAS can be rewritten in the following compact form
 {
\begin{equation}
\begin{cases}
\D\Xi_L(t)=&\!\!\!(I_M\otimes A\otimes I_p)\Xi_L(t), \\
\D\Xi_F(t)=&\!\!\!\big(I_N\otimes A\otimes I_p-L_2\otimes BK\otimes I_p\big)\Xi_F(t)\\
&\!\!\!\!\!\!\!\!\!\!\!\!\!\!\!\!\!\!\!\!\!\!\!\!-\big(L_1\otimes BK\otimes I_p\big)\Xi_L(t)+(I_N\otimes B \otimes I_p)\Delta(t),
\end{cases}
\end{equation}
where $\Delta(t)=\diag(\D^n\delta_{M+1}^T(t),\cdots,\D^n\delta_{M+N}^T(t))$.}
This leads to that
 {
\begin{multline}\label{eq:9}
\D{\hat \Xi}_F(t)=(I_N\otimes  A\otimes I_p- L_2\otimes BK\otimes I_p)\hat\Xi_F(t)\\
+(I_N\otimes B \otimes I_p)\Delta(t),
\end{multline}}
where $\hat \Xi_F(t)=\Xi_F(t)+(L_2^{-1}L_1\otimes I_{(n+1)p}) \Xi_L(t)$.
By Lemma 1 and Definition 1, if ${\hat \Xi}_F(t)$ is convergent to zero, then the containment problem is solved.

 {
\begin{thm}\label{thm:con_1}
	Assume all leaders move along polynomial trajectories described by \eqref{eq:16}; and all followers have single-integrator dynamics described by \eqref{eq:1}.
	Let $P$ denote the positive definite solution to the following matrix inequality
	\begin{equation}\label{eq:43}
	A^TP+PA+I_{n+1}-PBB^TP\le 0.
	\end{equation}
	If the order of the polynomial disturbance $\delta_i(t)$ in \eqref{eq:1} is not greater than $n-1$ (namely, $r\le n-1$), then the containment problem of MASs can be solved by \eqref{eq:11} with $K=\varepsilon B^TP$ where $\varepsilon\ge 0.5\max\{1,\sigma_{\min}^{-1}\}$, $\sigma_{\min}\in(0,\lambda_{\min})$, and $\lambda_{\min}=\min\{\Re(\lambda_i)| \lambda_i\ ( i=1,\cdots,N) \text{ is the minimum eigenvalue of } L_2\}$.
\end{thm}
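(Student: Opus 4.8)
The plan is to exploit the hypothesis $r\le n-1$ to annihilate the forcing term in \eqref{eq:9}, and then to certify asymptotic stability of the resulting autonomous system by reducing it, through the spectral structure of $L_2$, to a finite family of Lyapunov inequalities governed by the Riccati-type condition \eqref{eq:43}. First I would note that each disturbance $\delta_i(t)$ is a polynomial of degree at most $r\le n-1$, so $\D^n\delta_i(t)=\textbf{0}$ and hence $\Delta(t)\equiv\textbf{0}$. Equation \eqref{eq:9} then collapses to the homogeneous system
\[
\D\hat\Xi_F(t)=\Phi\,\hat\Xi_F(t),\qquad \Phi:=I_N\otimes A\otimes I_p-L_2\otimes BK\otimes I_p,
\]
and by the observation preceding the theorem it suffices to prove that $\Phi$ is Hurwitz, since this forces $\hat\Xi_F(t)\to\textbf{0}$ and, via Lemma \ref{lem:Laplacian}, solves the containment problem.

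To locate the spectrum of $\Phi$, I would bring $L_2$ to an upper-triangular form $U=T^{-1}L_2T$ (a Schur or Jordan form) whose diagonal entries are the eigenvalues $\lambda_1,\dots,\lambda_N$ of $L_2$, all with positive real part by Lemma \ref{lem:Laplacian}. Conjugating $\Phi$ by $T\otimes I_{(n+1)p}$ and applying the mixed-product rule for Kronecker products yields the block-upper-triangular matrix $I_N\otimes A\otimes I_p-U\otimes BK\otimes I_p$, whose diagonal blocks are $(A-\lambda_iBK)\otimes I_p$. Hence the eigenvalues of $\Phi$ are exactly those of $A-\lambda_iBK$ for $i=1,\dots,N$, and it remains to show that each of these matrices is Hurwitz.

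For this final and most delicate step I would take $V(z)=z^HPz$ along a (generally complex) mode $\D z=(A-\lambda_iBK)z$ with $K=\varepsilon B^TP$, so $BK=\varepsilon BB^TP$. Since $A,B,P$ are real and $P=P^T$, a direct computation gives
\[
\D V=z^H\big(A^TP+PA-2\varepsilon\Re(\lambda_i)PBB^TP\big)z,
\]
and substituting \eqref{eq:43} bounds this by $z^H\big(-I_{n+1}+(1-2\varepsilon\Re(\lambda_i))PBB^TP\big)z$. Because $2\varepsilon\ge\max\{1,\sigma_{\min}^{-1}\}\ge\sigma_{\min}^{-1}$ while $\Re(\lambda_i)\ge\lambda_{\min}>\sigma_{\min}$, one has $2\varepsilon\Re(\lambda_i)>1$, so the indefinite term is negative semidefinite and $\D V\le-\|z\|_2^2<0$ for $z\neq\textbf{0}$; thus every $A-\lambda_iBK$ is Hurwitz and so is $\Phi$. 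The main obstacle I anticipate lies precisely here: since $\lambda_i$ may be complex one must work with the Hermitian form $z^HPz$ and track that the cross term collapses to exactly $2\Re(\lambda_i)$, and one must confirm that the stated gain bound yields $2\varepsilon\Re(\lambda_i)\ge1$ uniformly in $i$ (with the existence of a positive definite $P$ itself following from controllability of the integrator pair $(A,B)$).
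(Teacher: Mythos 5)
Your proposal is correct and follows essentially the same route as the paper's own proof: kill the forcing term via $r\le n-1$, triangularize $L_2$ by a Schur-type similarity so the spectrum of the closed-loop matrix reduces to that of the blocks $A-\lambda_iBK$, and certify each block Hurwitz through the Hermitian Lyapunov computation $(A-\lambda_iBK)^HP+P(A-\lambda_iBK)=A^TP+PA-2\varepsilon\Re(\lambda_i)PBB^TP\le -I_{n+1}+(1-2\varepsilon\Re(\lambda_i))PBB^TP$ combined with the gain bound $2\varepsilon\Re(\lambda_i)\ge 1$ from $\varepsilon\ge 0.5\max\{1,\sigma_{\min}^{-1}\}$ and $\Re(\lambda_i)\ge\lambda_{\min}>\sigma_{\min}$. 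The only additions beyond the paper are cosmetic: you phrase the block step via quadratic forms $z^HPz$ rather than citing Lyapunov theory directly, and you note that existence of $P$ follows from controllability of $(A,B)$, which the paper simply assumes.
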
}

\begin{proof}
	First, it is proved that $(I_N\otimes  A- L_2\otimes BK)\otimes I_p$ is a {\it Hurwitz} matrix.
	By Schur decomposition, there must exist a transformation matrix $T_c$ such that
	\begin{equation}
	\Lambda_c\triangleq T_c L_2T_c^{-1}=
	\begin{bmatrix}
	\lambda_1& * &\cdots&* \\
	0 &\lambda_2 & \cdots&*\\
	\vdots &\vdots&\ddots&\vdots\\
	0&0&\cdots&\lambda_N
	\end{bmatrix},\nonumber
	\end{equation}
	which leads to $(I_N\otimes  A- L_2\otimes BK)=(T_c\otimes I_n)(I_N\otimes  A- \Lambda_c\otimes BK)(T_c^{-1}\otimes I_n)$.
	Hence, the diagonal elements of $I_N\otimes  A- \Lambda_c\otimes BK$  are $A-\lambda_iBK$ ($i=1,\cdots,N$).
	
	By Lemma \ref{lem:Laplacian}, all eigenvalues $\{\lambda_i;i=1,\cdots,N\}$ have positive real parts. Hence,  {$(0,\lambda_{\min})$} is not empty.
	It is easy to see that $\varepsilon\Re(\lambda_i)\ge0.5\Re(\lambda_i)\max\{1,\sigma_{\min}^{-1}\}\ge 0.5$.
	This together with \eqref{eq:43} leads to that
	\begin{equation*}
	\begin{split}
	&(A-\lambda_iBK)^HP+P(A-\lambda_iBK)\\
	=&(A-\varepsilon\lambda_iBB^TP)^HP+P(A-\varepsilon\lambda_iBB^TP)\\
	=& A^TP+PA-2\varepsilon\Re(\lambda_i) PBB^TP\\
	\le&-I_{n+1}+(1-2\varepsilon\Re(\lambda_i))PBB^TP<0.
	\end{split}
	\end{equation*}
	By Lyapunov stability theory, $A-\lambda_iBK$ ($i=1,\cdots,N$) are all {\it Hurwitz} matrices, which implies that $(I_N\otimes  A- L_2\otimes BK)\otimes I_p$ is a {\it Hurwitz} matrix.
	
	Next, it is proved that $\|\hat \Xi_F(t)\|_2$ is convergent to zero.
	 {Since $r\le n-1$, it is obtained that $\D^n\delta_i(t)=\textbf{0}_p$}.
	Therefore, the solution to \eqref{eq:9} is $ {\hat \Xi}_F(t)=e^{((I_N\otimes  A- L_2\otimes BK)\otimes I_p)t}\hat\Xi_F(0)$.
	Since $(I_N\otimes  A- L_2\otimes BK)\otimes I_p$ is a {\it Hurwitz} matrix, it is obtained
	\begin{multline}
	\lim_{t\to\infty}\|\hat\Xi_F(t)\|_2\\
	\le\lim_{t\to\infty}\|e^{((I_N\otimes  A- L_2\otimes BK)\otimes I_p)t}\|_2\|\hat\Xi_F(0)\|_2=0.
	\end{multline}
\end{proof}

\section{Extensions to Followers with High-Order Integral Dynamics}\label{sec:highorder}
In Section \ref{sec:dis}, the followers are described by the first-order integral dynamics.
However, due to the diversity of control objects in practice, it is more interesting to study the follower described by the high-order integral dynamics. In this section, the dynamics of the $(i-M)$th follower $(i\in{\cal N})$ is described by
\begin{equation}\label{eq:7}
 {\D^{m}x_i(t)=u_i(t)},
\end{equation}
where  {$x_i(t)\in{\mathbb R}^p$} is the position vector of the $(i-M)$th follower; and  {$u_i(t)\in{\mathbb R}^p$} is the control input of the $(i-M)$th follower.

Motivated by the $PI^n$-type algorithm \eqref{eq:11}, we propose the following containment algorithm
\begin{equation}\label{eq:2}
	 {u_i(t)=\sum_{l=0}^{l_m-1}\kappa_l\D^{m-l-1}\sum_{j\in{\cal M\cup N}}\alpha_{ij}e_{ji}(t),}
\end{equation}
where $l_m=\max\{m,n+1\}$.

\
\begin{thm}\label{thm:con_2}
	Assume all leaders move along their polynomial trajectories described by \eqref{eq:16}; and all followers are described by \eqref{eq:7}.
	Let $P$ denote the positive definite solution to the following matrix inequality
	\begin{equation}
	 E^TP+PE+I_{l_m}-PFF^TP\le 0,\nonumber
	\end{equation}
	where
	\begin{equation}
	E=
	\begin{bmatrix}
	0 & 1 & \cdots & 0\\
	\vdots & \vdots & \ddots & \vdots\\
	0 & 0 & \cdots & 1\\
	0 & 0 & \cdots & 0
	\end{bmatrix}\in {\mathbb R}^{l_m\times l_m},\ \text{and} \
	F=
	\begin{bmatrix}
	0\\
	\vdots\\
	0\\
	1
	\end{bmatrix}\in{\mathbb R}^{l_m}.\nonumber
	\end{equation}
	The containment problem of MASs can be solved by \eqref{eq:2} with $K=(\kappa_{l_m-1},\cdots,\kappa_0)=\varepsilon F^TP$, where $\varepsilon$ is defined in Theorem \ref{thm:con_1}.
\end{thm}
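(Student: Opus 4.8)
The plan is to reduce Theorem~\ref{thm:con_2} to the argument already established for Theorem~\ref{thm:con_1} by augmenting the state so that the closed loop takes an algebraically identical form, with the quadruple $(E,F,K,l_m)$ playing the roles of $(A,B,K,n+1)$. First I would set $\xi_i(t)=(x_i^T(t),\D x_i^T(t),\cdots,\D^{l_m-1}x_i^T(t))^T\in{\mathbb R}^{l_m p}$, whose $k$th block is $\D^k x_i(t)$. For a leader $i\in{\cal M}$, its trajectory \eqref{eq:16} is an $n$th-order polynomial and $l_m=\max\{m,n+1\}\ge n+1$, so the top derivative vanishes, $\D^{l_m}x_i(t)=\textbf{0}_p$, which gives $\D\xi_i(t)=(E\otimes I_p)\xi_i(t)$.

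The crucial step is the follower computation. For $i\in{\cal N}$ the last block of $\D\xi_i$ is $\D^{l_m}x_i=\D^{l_m-m}u_i$ by \eqref{eq:7}; substituting \eqref{eq:2} and using $\D^{l_m-m}\D^{m-l-1}=\D^{l_m-l-1}$, every summand becomes a non-negative differentiation order $l_m-l-1\in\{0,\cdots,l_m-1\}$ as $l$ ranges over $\{0,\cdots,l_m-1\}$. This is precisely why $l_m\ge m$ is demanded: it guarantees that differentiating the mixed proportional/integral controller leaves no residual integrator. Matching the coefficient of $\D^{l_m-l-1}(x_j-x_i)$ against the entries of $K=(\kappa_{l_m-1},\cdots,\kappa_0)$ then yields $\D^{l_m}x_i=\sum_{j\in{\cal M\cup N}}\alpha_{ij}(K\otimes I_p)(\xi_j-\xi_i)$, hence $\D\xi_i=(E\otimes I_p)\xi_i+(FK\otimes I_p)\sum_{j\in{\cal M\cup N}}\alpha_{ij}(\xi_j-\xi_i)$. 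This is structurally identical to the single-integrator closed loop of Theorem~\ref{thm:con_1}, with no disturbance term since \eqref{eq:7} carries none.

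From here the remainder is a transcription. Stacking the states into $\Xi_L,\Xi_F$ and invoking the block structure \eqref{eq:4}, I would set $\hat\Xi_F(t)=\Xi_F(t)+(L_2^{-1}L_1\otimes I_{l_m p})\Xi_L(t)$ to obtain the homogeneous equation $\D\hat\Xi_F=\big((I_N\otimes E-L_2\otimes FK)\otimes I_p\big)\hat\Xi_F$. The Hurwitz argument of Theorem~\ref{thm:con_1} then applies verbatim: a Schur triangularisation of $L_2$ places $E-\lambda_i FK=E-\varepsilon\lambda_i FF^TP$ on the diagonal, and the matrix inequality defining $P$ gives $(E-\lambda_iFK)^HP+P(E-\lambda_iFK)\le -I_{l_m}+(1-2\varepsilon\Re(\lambda_i))PFF^TP<0$, valid because $\varepsilon\Re(\lambda_i)\ge0.5$ while Lemma~\ref{lem:Laplacian} ensures $\Re(\lambda_i)>0$. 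Thus the system matrix is \emph{Hurwitz}, $\hat\Xi_F(t)\to\textbf{0}$ exponentially, and by Lemma~\ref{lem:Laplacian} and Definition~\ref{def:1} the containment problem is solved.

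I expect the only genuine obstacle to lie in the second paragraph, namely the index bookkeeping: verifying that $l_m=\max\{m,n+1\}$ is the correct dimension and that the single operator $\D^{l_m-m}$ lands every controller term on a non-negative derivative order whose coefficient matches $K$ exactly, together with checking the initial-condition consistency of the augmented integral states. Once those computations are confirmed, the \emph{Hurwitz} property and the exponential convergence are inherited directly from Theorem~\ref{thm:con_1}.
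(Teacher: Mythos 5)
Your proposal is correct and follows essentially the same route as the paper's own proof: the paper likewise augments the state to $\xi_i(t)=\big(x_i^T(t),\D x_i^T(t),\cdots,\D^{l_m-1}x_i^T(t)\big)^T$, forms $\hat\Xi_F(t)=\Xi_F(t)+(L_2^{-1}L_1\otimes I_{l_mp})\Xi_L(t)$, and invokes the Schur-triangularization/Lyapunov-inequality argument of Theorem~\ref{thm:con_1} to conclude $\|\hat\Xi_F(t)\|_2\le M_1e^{-\beta_1 t}\|\hat\Xi_F(0)\|_2\to0$ and hence containment via Lemma~\ref{lem:Laplacian}. The only difference is that the paper leaves the reduction implicit (``following the same procedure''), whereas you explicitly carry out the index bookkeeping showing $\D^{l_m-m}\D^{m-l-1}=\D^{l_m-l-1}$ maps every controller term to a non-negative derivative order matching $K$ — a detail the paper omits but which your write-up correctly verifies.
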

	\color{black}
\begin{proof}
	Let  $\xi_i(t)=\big(x_i^T(t), {x_i^{(1)}}^T(t),\cdots,{x_i^{(l_m-1)}}^T(t)\big)^T$, $\Xi_L(t)=(\xi_1^T(t),\cdots,\xi_{M}^T(t)^T)^T$, $\Xi_F(t)=(\xi_{M+1}^T(t),\cdots,\xi_{M+N}^T(t))^T$ and $\hat \Xi_F(t)=\Xi_F(t)+(L_2^{-1}L_1\otimes I_{l_mp}) \Xi_L(t)$.
	 {
	Following the same procedure of the proof of Theorem \ref{thm:con_1}, it can be proved that there must exist two positive constants $M_1<\infty$ and $\beta_1$, such that $\|e^{(I_N\otimes E\otimes I_p- L_2\otimes FK\otimes I_p)t}\|_2\le M_1e^{-\beta_1t}$ and $\|\hat\Xi_F(t)\|_2\le M_1e^{-\beta_1t}\|\hat\Xi_F(0)\|_2\to 0\ (t\to\infty)$.}
	By Lemma \ref{lem:Laplacian}, the algorithm defined by \eqref{eq:2} solves the containment problem.
\end{proof}

\color{black}

\begin{remark}
It can be seen from \eqref{eq:2} that besides the proportional term  {$\sum_{j\in{\cal M\cup N}}\alpha_{ij}e_{ji}(t)$} and the integral terms {$\big\{\D^{-l}\sum_{j\in{\cal M\cup N}}\alpha_{ij}e_{ji}(t),\ l=1,2,\cdots,l_m-m\big\}$}, \eqref{eq:2} also includes the ``differential terms''  {$\big\{\D^l\sum_{j\in{\cal M\cup N}}\alpha_{ij}e_{ji}(t),\ l=1,2,\cdots,m-1\big\}$}.
Therefore, the proposed algorithm defined by \eqref{eq:2} is essentially a generalized ``$PID$'' algorithm (we can call it the ``$PI^{l_m-m}D^{m-1}$''-type algorithm).
\end{remark}

It is well known that the proportional term depends on the present information; the integral term represents the accumulation of past information; and the differential term is the future information, which might be more expensive to be measured.
Hence the differential terms  {$\big\{\D^l\sum_{j\in{\cal M\cup N}}\alpha_{ij}e_{ji}(t),\ l=1,2,\cdots,m-1\big\}$} in \eqref{eq:2} might be difficult to obtain.
Motivated by \cite{Li11IJRNC,Wen13IJC}, one way to handle this challenge is to design the state estimator for the $(i-M)$th follower agent to estimate its own state {$\{x_i(t), \D x_i(t),\cdots,\D^{m-1} x_i(t),\ i\in{\cal N}\}$}.
Let  {$z_i(t)=(z_{i}^T(t),\cdots,z_{im}^T(t)^T)^T\in{\mathbb R}^{mp}$} denote the estimated state of the $(i-M)$th follower agent.
Then the followers exchange their estimated states with their neighbor agents via the communication network $\cal G$ to obtain the differential terms.
Since the leaders are essentially the reference signals (the polynomial trajectories defined by \eqref{eq:16}), each leader should know its current position and any-order derivatives of the current position accurately. Therefore, there is no need to design the estimators for leaders. The $i$th leader ($i\in{\cal M}$) directly sends its state  {$z_i(t) = (x_i(t), \D x_i^T(t), \cdots,\D^{m-1} x_i^T(t))^T$} to the connected neighbors.

By the above discussion, the state estimator of the $(i-M)$th follower ($i\in{\cal N}$) is designed as
\begin{multline}\label{eq:48}
\D z_i(t)=(\bar E\otimes I_p)z_i(t)+(\bar F\otimes I_p) u_i(t)+(K_e\otimes I_p)\\
\times\sum_{j=1}^{M+N}\alpha_{ij}\Big((G\otimes I_p)(z_j(t)-z_i(t))- {e_{ji}(t)}\Big),
\end{multline}
where $K_e\in{\mathbb R}^m$, $\bar F=(0,\cdots,0,1)^T\in{\mathbb R}^{m}$, $G=(1,0,\cdots,0)\in{\mathbb R}^{1\times m}$ and
\begin{equation}
\bar E=
\begin{bmatrix}
0 & 1 & \cdots & 0\\
\vdots & \vdots & \ddots & \vdots\\
0 & 0 & \cdots & 1\\
0 & 0 & \cdots & 0
\end{bmatrix}\in {\mathbb R}^{m\times m}.\nonumber
\end{equation}

\begin{lemma}\label{lem:1}
	Let $K_e=\varepsilon PG^T$ where  {$\varepsilon$ is defined in Theorem \ref{thm:con_1} and} $P$ is the solution to the following matrix inequality
	\begin{equation}
	\bar EP+P\bar E^T+I_{l_m}-PG^TGP\le 0.\nonumber
	\end{equation}
	Then there exist two positive constants $M_2<\infty$ and $\beta_2$ such that $\|\D^{l_m-m}\hat Z(t)\|_2\le M_2e^{-\beta_2 t}$, where $\hat Z(t)=(\hat z_{M+1}^T(t),\cdots,\hat z_{M+N}^T(t))^T$, $\hat z_i(t)=z_i(t)-\zeta_i(t)$ and  {$\zeta_i(t)=(x_i^T(t), \D x_i^T(t),\cdots,\D^{m-1} x_i^T(t))^T$}.
\end{lemma}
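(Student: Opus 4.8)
The plan is to track the estimation error $\hat z_i(t)=z_i(t)-\zeta_i(t)$ and to show that the stacked error $\hat Z(t)$ obeys an autonomous linear time-invariant equation whose coefficient matrix is \emph{Hurwitz}; the stated bound on $\D^{l_m-m}\hat Z(t)$ then follows by differentiating the flow. Since follower $i$ satisfies $\D^m x_i(t)=u_i(t)$, its true augmented state obeys $\D\zeta_i(t)=(\bar E\otimes I_p)\zeta_i(t)+(\bar F\otimes I_p)u_i(t)$, which is exactly the open-loop part of the estimator \eqref{eq:48}. Subtracting, the $(\bar F\otimes I_p)u_i(t)$ terms cancel and I am left with
\begin{equation}
\D\hat z_i(t)=(\bar E\otimes I_p)\hat z_i(t)+(K_e\otimes I_p)\sum_{j}\alpha_{ij}\big((G\otimes I_p)(z_j-z_i)-e_{ji}\big).\nonumber
\end{equation}

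The crucial step is to simplify the measurement residual. Because $(G\otimes I_p)\zeta_k(t)=x_k(t)$ for every agent $k$ and $e_{ji}=x_j-x_i$, one has $(G\otimes I_p)(z_j-z_i)-e_{ji}=(G\otimes I_p)(\hat z_j-\hat z_i)$; moreover each leader transmits its exact state, so $\hat z_k\equiv\textbf{0}$ for $k\in{\cal M}$. Thus the residual couples only follower errors, the true-state and control terms drop out entirely, and the error system is \emph{homogeneous}. Invoking the block Laplacian structure \eqref{eq:4} (the leader block contributes nothing and the follower rows reproduce $L_2$), the stacked error obeys
\begin{equation}
\D\hat Z(t)=\big(\Phi_0\otimes I_p\big)\hat Z(t),\qquad \Phi_0:=I_N\otimes\bar E-L_2\otimes K_eG.\nonumber
\end{equation}
I expect this cancellation — realizing that the residual collapses to $(G\otimes I_p)(\hat z_j-\hat z_i)$ so that $\hat Z$ satisfies an autonomous equation — to be the only genuinely non-routine obstacle.

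It then remains to show $\Phi_0$ is \emph{Hurwitz}, which I would establish by duality with Theorem \ref{thm:con_1}. Applying the Schur decomposition of $L_2$ exactly as in that proof reduces the problem to showing each diagonal block $\bar E-\lambda_iK_eG$ is \emph{Hurwitz}. Transposing, $(\bar E-\lambda_iK_eG)^T=\bar E^T-\lambda_i\varepsilon G^TGP$, which is precisely the matrix analysed in Theorem \ref{thm:con_1} for the pair $(\tilde A,\tilde B)=(\bar E^T,G^T)$ with gain $\varepsilon\tilde B^TP=\varepsilon GP$; indeed the hypothesis $\bar EP+P\bar E^T+I-PG^TGP\le 0$ is exactly $\tilde A^TP+P\tilde A+I-P\tilde B\tilde B^TP\le 0$. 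Hence the Lyapunov computation of Theorem \ref{thm:con_1} applies verbatim, each $\bar E-\lambda_iK_eG$ (being the transpose of a \emph{Hurwitz} matrix) is \emph{Hurwitz}, and therefore $\Phi_0$ and $\Phi:=\Phi_0\otimes I_p$ are \emph{Hurwitz}.

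Finally I would extract the decay estimate. Since $\Phi$ is \emph{Hurwitz}, there exist $M_1<\infty$ and $\beta_1>0$ with $\|e^{\Phi t}\|_2\le M_1e^{-\beta_1 t}$, and $\hat Z(t)=e^{\Phi t}\hat Z(0)$. As the error system is autonomous and LTI, repeated differentiation gives $\D^{l_m-m}\hat Z(t)=\Phi^{\,l_m-m}\hat Z(t)$, so that
\begin{equation}
\|\D^{l_m-m}\hat Z(t)\|_2\le\|\Phi^{\,l_m-m}\|_2\,M_1e^{-\beta_1 t}\|\hat Z(0)\|_2,\nonumber
\end{equation}
and setting $M_2:=\|\Phi^{\,l_m-m}\|_2M_1\|\hat Z(0)\|_2$ and $\beta_2:=\beta_1$ completes the argument.
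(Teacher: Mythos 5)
Your proof is correct and takes essentially the same route as the paper: subtract the true-state dynamics from the estimator to obtain the homogeneous LTI error system $\D\hat Z(t)=\big((I_N\otimes\bar E-L_2\otimes K_eG)\otimes I_p\big)\hat Z(t)$, show the coefficient matrix is \emph{Hurwitz} by the Schur-decomposition/Lyapunov procedure of Theorem~\ref{thm:con_1}, and read off the exponential bound for $\D^{l_m-m}\hat Z(t)$. You merely make explicit two steps the paper leaves implicit (the cancellation showing the residual collapses to $(G\otimes I_p)(\hat z_j-\hat z_i)$ with zero leader error, and the transposition/duality that lets Theorem~\ref{thm:con_1}'s computation apply to the observer-type inequality), which is a faithful filling-in rather than a different argument.
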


\begin{proof}
	From \eqref{eq:7} and \eqref{eq:48}, it is obtained that
	\begin{equation}
	\D^{l_m-m+1}\hat Z(t)=(I_{N}\otimes E \otimes I_p-L_2\otimes K_eG\otimes I_p)\D^{(l_m-m)}\hat Z(t).\nonumber
	\end{equation}
	Following the same procedure of the proof of Theorem \ref{thm:con_1}, it can be easily proved that $(I_{N}\otimes E \otimes I_p-L_2\otimes K_eG\otimes I_p)$ is a {\it Hurwitz} matrix.
	Therefore, there must exist two positive constants  $M_2<\infty$ and $\beta_2$ such that $\|\D^{(l_m-m)}\hat Z(t)\|_2=\|e^{(I_{N}\otimes E \otimes I_p-L_2\otimes K_eG\otimes I_p)t}\|_2\|\hat Z^{(l_m-m)}(0)\|_2\le M_2e^{-\beta_2t}$.
\end{proof}

Replacing  {$\D^le_{ji}(t)$ with $(z_{jl}(t) - z_{il}(t))$, $l=1,\cdots,m-1$}, the containment algorithm \eqref{eq:2} is modified as
 {
\begin{multline}\label{eq:40}
u_i(t)=\sum_{j\in{\cal M\cup N}}\alpha_{ij}\left(\sum_{l=m-1}^{l_m-1}\kappa_l\D^{m-l-1}e_{ji}(t)\right.,\\
\left.+\sum_{l=0}^{m-2}\kappa_l(z_{j(m-l)}(t)-z_{i(m-l)}(t))\right).
\end{multline}
}

\

\begin{thm}\label{thm:con_3}
	Assume all leaders move along their polynomial trajectories described by \eqref{eq:16}; and all followers are described by high-order integral dynamics \eqref{eq:7}.
	The containment problem can be solved by \eqref{eq:40} with $K=(\kappa_{l_m-1},\cdots,\kappa_0)=\varepsilon F^TP$, where $\varepsilon$ is defined in Theorem \ref{thm:con_1} and $P$ is defined in Theorem \ref{thm:con_2}.
\end{thm}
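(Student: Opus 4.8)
The plan is to treat the implementable algorithm \eqref{eq:40} as a perturbation of the idealized algorithm \eqref{eq:2}, where the perturbation is exactly the estimation error governed by Lemma \ref{lem:1}, and then close the argument with a cascade stability argument resting on Theorem \ref{thm:con_2}.

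First I would compute the difference between \eqref{eq:40} and \eqref{eq:2}. The proportional and integral terms (the sum $\sum_{l=m-1}^{l_m-1}$) are identical in both, so only the differential terms differ. Since $z_{i(m-l)}(t)$ estimates $\D^{m-l-1}x_i(t)$, and writing $\hat z_i(t)=z_i(t)-\zeta_i(t)$ as in Lemma \ref{lem:1}, we have
\[
z_{j(m-l)}(t) - z_{i(m-l)}(t) = \D^{m-l-1} e_{ji}(t) + \bigl(\hat z_{j(m-l)}(t) - \hat z_{i(m-l)}(t)\bigr).
\]
Hence \eqref{eq:40} equals the idealized input \eqref{eq:2} plus a residual $w_i(t)=\sum_{j\in{\cal M}\cup{\cal N}} \alpha_{ij} \sum_{l=0}^{m-2} \kappa_l \bigl(\hat z_{j(m-l)}(t) - \hat z_{i(m-l)}(t)\bigr)$. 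Because each leader transmits its exact state, $\hat z_j(t)=\textbf{0}$ for $j\in{\cal M}$, so $w_i(t)$ is a linear function of the follower estimation errors only.

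Next I would assemble the containment error dynamics in the augmented coordinates $\xi_i$, $\Xi_F$, $\hat\Xi_F$ used in the proof of Theorem \ref{thm:con_2}. The closed loop under \eqref{eq:40} then takes the form
\[
\D \hat\Xi_F(t) = \bigl(I_N \otimes E \otimes I_p - L_2 \otimes FK \otimes I_p\bigr) \hat\Xi_F(t) + (I_N \otimes F \otimes I_p) W(t),
\]
where $W(t)$ stacks the residuals $w_i(t)$. This is precisely the \emph{Hurwitz} system established in Theorem \ref{thm:con_2}, now forced by $W(t)$. The decisive structural point is that the overall system is a \emph{cascade}: the derivation in Lemma \ref{lem:1} shows that the estimation error obeys an autonomous Hurwitz linear equation that does not depend on $\hat\Xi_F(t)$, so $\|W(t)\|_2 \le c\,\|\hat Z(t)\|_2$ decays exponentially by Lemma \ref{lem:1}, independently of the plant state. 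Writing $\hat\Xi_F$ by variation of constants and using the bound $\|e^{(I_N\otimes E\otimes I_p - L_2\otimes FK\otimes I_p)t}\|_2 \le M_1 e^{-\beta_1 t}$ from Theorem \ref{thm:con_2} together with the exponential bound on $W$, the forced response (a convolution of two exponentially decaying factors) tends to zero, so $\hat\Xi_F(t)\to \textbf{0}$. Finally, Lemma \ref{lem:Laplacian} and Definition \ref{def:1} convert $\hat\Xi_F\to\textbf{0}$ into the containment conclusion exactly as in Theorem \ref{thm:con_2}.

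The main obstacle I expect lies in the cascade/forced-convergence step rather than in any single computation. One must carefully verify that the substituted estimator terms reproduce the true derivative differences up to the errors $\hat z$, that the leader errors genuinely vanish so $W$ is exponentially decaying, and that the residual enters only through the input channel $(I_N\otimes F\otimes I_p)$. The only real subtlety in the convolution bound is the resonant case $\beta_1=\beta_2$, which produces a factor $t\,e^{-\beta_1 t}$; this still converges to zero, and bounding the product by any rate $\beta<\min\{\beta_1,\beta_2\}$ removes even this nuisance.
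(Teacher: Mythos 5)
Your proposal follows essentially the same route as the paper's own proof: rewrite the closed loop under \eqref{eq:40} as the Hurwitz system of Theorem \ref{thm:con_2} forced by the estimation error, invoke Lemma \ref{lem:1} for exponential decay of the forcing, conclude by variation of constants and the convolution of two decaying exponentials (including the resonant case $\beta_1=\beta_2$), and finish with Lemma \ref{lem:Laplacian}.

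There is, however, one technical slip you should fix. In the augmented coordinates $\xi_i(t)=\big(x_i^T(t),{x_i^{(1)}}^T(t),\cdots,{x_i^{(l_m-1)}}^T(t)\big)^T$ the control enters the last block through $\D^{l_m}x_i(t)=\D^{l_m-m}u_i(t)$, since the plant is $\D^m x_i(t)=u_i(t)$ and in general $l_m=\max\{m,n+1\}>m$. Consequently the forcing in the $\hat\Xi_F$-equation is not $(I_N\otimes F\otimes I_p)W(t)$ but its $(l_m-m)$th derivative; in the paper's notation the correct equation is $\D\hat\Xi_F(t)=R_1\hat\Xi_F(t)-R_2\D^{l_m-m}\hat Z(t)$ with $R_2=L_2\otimes FK_2\otimes I_p$ and $K_2=(0,\kappa_{m-2},\cdots,\kappa_0)$, which coincides with your $-(I_N\otimes F\otimes I_p)\D^{l_m-m}W(t)$. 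This is precisely why Lemma \ref{lem:1} is stated as a bound on $\|\D^{l_m-m}\hat Z(t)\|_2$ rather than on $\|\hat Z(t)\|_2$: your step ``$\|W(t)\|_2\le c\,\|\hat Z(t)\|_2$ decays exponentially by Lemma \ref{lem:1}'' cites the lemma for a quantity it does not bound. The slip is harmless given your own cascade observation---the estimator error obeys an autonomous linear Hurwitz equation (from \eqref{eq:7} and \eqref{eq:48}), so the relevant derivative of $\hat Z$ decays exponentially, and that is exactly what Lemma \ref{lem:1} delivers---and once the forcing term is corrected, your argument coincides with the paper's.
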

\color{black}
\begin{proof}
	By applying the containment algorithm defined by \eqref{eq:40}, it can be obtained that
	\begin{equation}\label{eq:46}
	\D{\hat \Xi}_F(t)=R_1\hat\Xi_F(t)-R_2\D^{(l_m-m)}\hat Z(t),
	\end{equation}
	where $\hat \Xi_F(t)$ is defined in the proof of Theorem \ref{thm:con_2}; $R_1=I_N\otimes E\otimes I_p- L_2\otimes FK\otimes I_p$; $R_2= L_2\otimes FK_2\otimes I_p$ and $K_2=(0,\kappa_{m-2},\cdots,\kappa_0)$.
	
	The solution to \eqref{eq:46} is $\hat \Xi_F(t)=e^{R_1t}\hat\Xi_F(0)+\int_0^t e^{R_1(t-\tau)}R_2 \D^{(l_m-m)}\hat Z(\tau)d\tau$, which together with Theorem \ref{thm:con_2} and Lemma \ref{lem:1} leads to
	\begin{align*}
	\|\hat \Xi_F(t)\|_2\le&\|e^{R_1t}\|_2\|\hat\Xi_F(0)\|_2\\
	&+\int_0^t \|e^{R_1(t-\tau)}\|_2\|R_2\|_2\|\D^{(l_m-m)}\hat Z(\tau)\|_2d\tau\\
	\le& \Sigma_1(t)+\Sigma_2(t),
	\end{align*}
	where $\Sigma_1(t)=M_1e^{-\beta_1t}\|\hat \Xi_F[0]\|_2$ and $\Sigma_2(t)=M_1M_2\|R_2\|_2\int_0^te^{-\beta_1(t-\tau)}e^{-\beta_2\tau}d\tau$.
	It is easy to see that $\lim_{t\to\infty}\Sigma_1(t)=0$ and
	\begin{multline}
	\lim_{t\to\infty}\Sigma_2(t)=\\
	\begin{cases}
	M_1M_2\|R_2\|_2\lim\limits_{t\to\infty}\frac{e^{-\beta_1t}-e^{-\beta_2t}}{\beta_1-\beta_2}=0,& \text{if}\ \beta_2\neq \beta_1\\
	M_1M_2\|R_2\|_2\lim\limits_{t\to\infty}te^{-\beta_1t}=0,& \text{if}\ \beta_1=\beta_2
	\end{cases}.
	\end{multline}
	Therefore, $\lim_{t\to\infty}\|\hat \Xi_F(t)\|_2=0$. By Lemma \ref{lem:Laplacian}, the containment problem is solved.
\end{proof}

\color{black}
\begin{remark}
Since the leader sends its absolute position to the connected followers, one may wonder whether these followers can calculate their own absolute positions by their relative positions with the leader and the leader's absolute position. However, this idea does not work because the follower can receive the position information not only from the leader (accurate position) but also other followers (estimated positions, not accurate). Due to the nature of distributed control of MASs, the follower cannot distinct the leader from other neighbor agents. Hence the follower can only randomly pick up one agent in its neighborhood to calculate its absolute position. If the selected agent is another follower, the calculated absolute position is obviously inaccurate.
\end{remark}

\
Next, a simulation example is provided to demonstrate the effectiveness of the proposed algorithm.

\noindent\textbf{Simulation Example:}
\color{black}
Consider a MAS composed of eight agents, whose communication topology is shown in Fig. \ref{fig:sim2_top}.
It is easy to see that agents $1$ to $4$ are leaders and agents $5$ to $8$ are followers.
 {The $i$th row and the $j$th column entry of the adjacency matrix satisfies that $\alpha_{ij}=1$ if there is a directed edge from agent $j$ to agent $i$, otherwise $\alpha_{ij}=0$.}

\begin{figure}
	\centering
	\includegraphics[width=0.7\hsize]{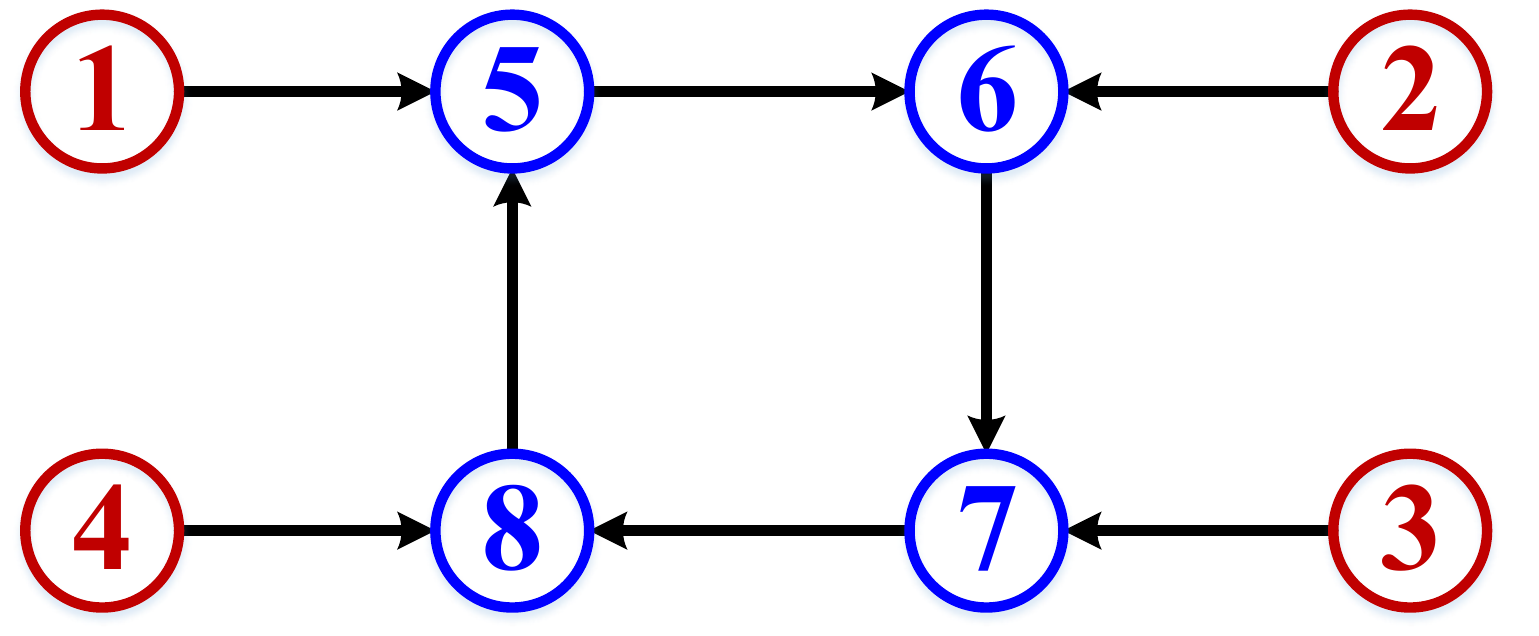}
	\caption{The interaction topology of the MAS in the Simulation Example of Section \ref{sec:highorder}.}
	\label{fig:sim2_top}
\end{figure}

 {
The $i$th leader's move along the trajectory defined by the following polynomial}
\begin{equation}\label{eq:51}
x_i(t)={\bf a}^i_0+{\bf a}^i_1t+{\bf a}^i_2t^2+{\bf a}^i_{3}t^{3},
\end{equation}
where $x_i(t)\in{\mathbb R}^2$ and the coefficients ${\bf a}_j^i\in{\mathbb R}^2$ are  {given} in TABLE \ref{Tab:sim1_coef}.
\begin{table}
	\centering\caption{Values of Coefficients $a^i_j$ in \eqref{eq:51}.}\label{Tab:sim1_coef}
	\begin{tabular}{c|c|c|c|c}
		\hline\hline
		agent & 1 & 2 & 3 & 4 \\\hline
		$a^i_0$ & $\begin{pmatrix}0\\0 \end{pmatrix}$ & $\begin{pmatrix}2\\5 \end{pmatrix}$ & $\begin{pmatrix}-5\\10 \end{pmatrix}$ &  $\begin{pmatrix}-10\\0 \end{pmatrix}$\\\hline
		$a^i_1$ & $\begin{pmatrix}0.23\\3.43 \end{pmatrix}$  &$\begin{pmatrix}0.3\\3.43 \end{pmatrix}$ & $\begin{pmatrix}0.2\\3.43 \end{pmatrix}$ & $\begin{pmatrix}0.2\\3.43 \end{pmatrix}$ \\\hline
		$a^i_2$ & $\begin{pmatrix}0.0095\\-0.75\end{pmatrix}$ & $\begin{pmatrix}0.0095\\-0.075 \end{pmatrix}$ & $\begin{pmatrix}0.01\\-0.075 \end{pmatrix}$ & $\begin{pmatrix}0.01\\-0.075 \end{pmatrix}$ \\\hline
		$a^i_3$& $\begin{pmatrix}0\\0.0005 \end{pmatrix}$ & $\begin{pmatrix}0\\0.0005 \end{pmatrix}$ & $\begin{pmatrix}0\\0.0005 \end{pmatrix}$ & $\begin{pmatrix}0\\0.0005 \end{pmatrix}$\\
		\hline\hline
	\end{tabular}
\end{table}
The $(i-4)$th follower ($i=5,6,7,8$) has the third-order integral dynamics, \emph{i.e.},
$x_{i}^{(3)}(t)=u_i(t)$.

{The control algorithm \eqref{eq:40} is applied to solve this containment problem.
By Theorem \ref{thm:con_2}, the parameters in algorithm \eqref{eq:40} are set to be $K=(\kappa_3,\kappa_2,\kappa_1,\kappa_0)=(2,\ 6.1554,\ 8.4721,\  6.1554)$.}
The simulation result is given in  Fig. \ref{fig:sim2_res}.
As shown in Fig. \ref{fig:sim2_res}, all followers are convergent into the convex hull spanned by the leaders and move along with them.
Therefore, the proposed algorithm is able to effectively solve the containment problem with dynamic leaders.

\begin{figure}
	\centering
	\includegraphics[width=0.9\hsize]{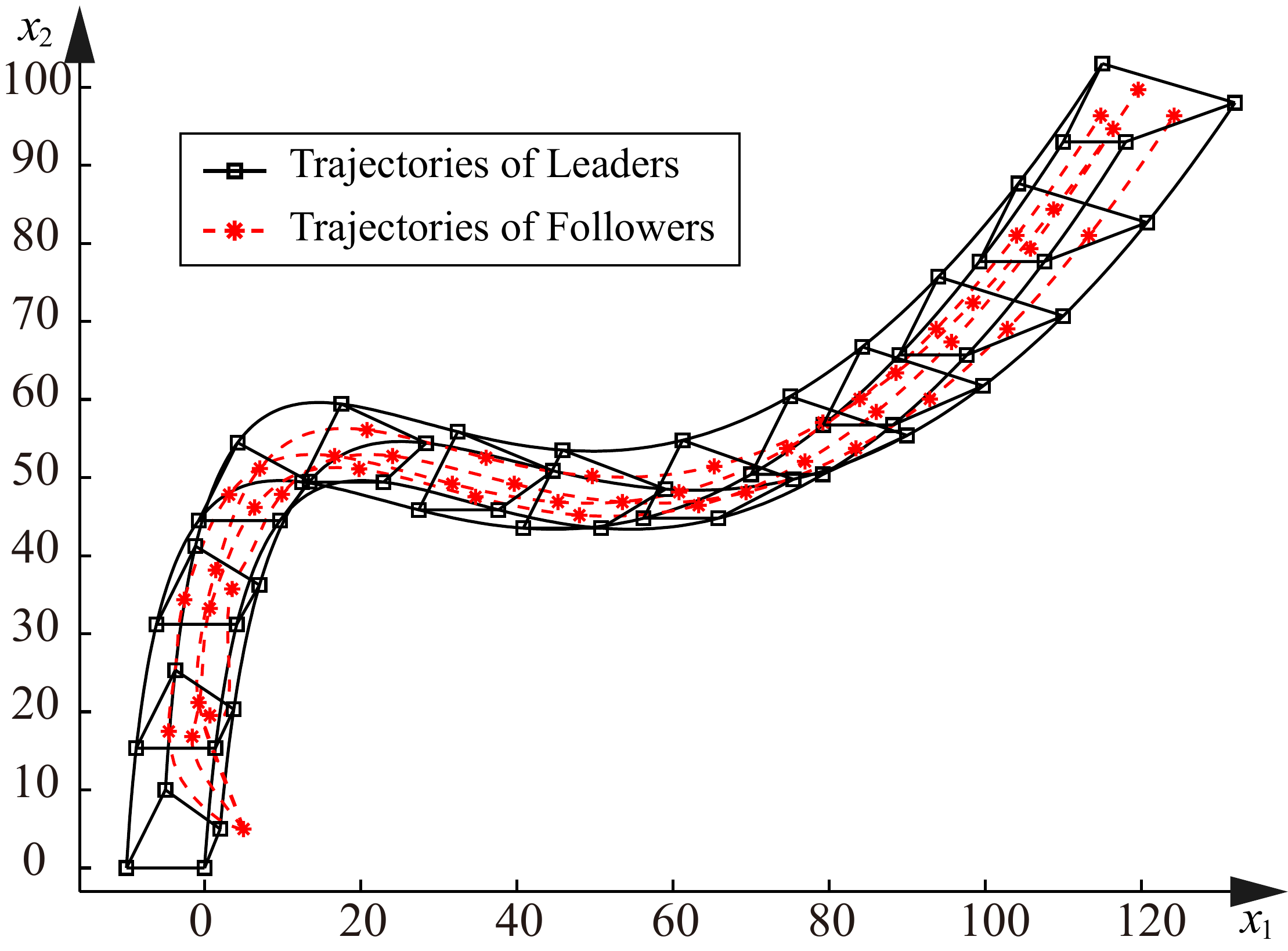}\\
	\caption{Moving profiles of all agents in the Simulation Example of Section \ref{sec:highorder}.}\label{fig:sim2_res}
\end{figure}

\color{black}
\section{ {Extensions to MASs in Discrete-Time Domain}}\label{sec:ext}
In this subsection, the containment problem is studied in the discrete-time domain.
The $i$th leader is assumed to move along the following polynomial trajectory
\begin{equation}\label{eq:30}
x_i[k]={\bf a}_0^i+{\bf a}_1^ik+\cdots+{\bf a}^i_{n}k^{n},
\end{equation}
where $x_i[k]\in{\mathbb R}^p$ and ${\bf a}^i_j\in{\mathbb R}^p$.

{In this section, the symbol $\D$ denotes the difference operator (namely, $\D x[k]=x[k+1]-x[k]$). The inversion of the difference operator $\D$ is defined as $\D^{-1}x[k]=\sum_{i=0}^{k-1}x[k]$. It is easy to see that $\D(\D^{-1}x[k])=x[k]$.
}


\subsection{Followers with Single-Integrator Dynamics}\label{sec:con_1}
In this subsection, the $(i-M)$th follower $(i\in{\cal N})$ is described by the following single-integrator dynamics
\begin{equation}\label{eq:23}
 {\D x_i[k]=u_i[k]+\delta_i[k],}
\end{equation}
where $x_i[k]\in{\mathbb R}^p$ denotes the position of the $(i-M)$th follower; $u_i[k]\in{\mathbb R}^p$ is the control input,  {$\delta_i[k]={\bf b}_0^i+{\bf b}_1^ik+\cdots+{\bf b}_r^ik^r\in{\mathbb R}^p$ is the polynomial disturbance, and ${\bf b}_j^i\in{\mathbb R}^p\;(j=0,\cdots,r) $}.
Motivated by \eqref{eq:11}, the following discrete-time $PI^n$-type algorithm is proposed
 {
\begin{multline}\label{eq:24}
u_i[k]=\frac{1}{1+d_i}\sum_{l=0}^{n}\kappa_l\D^{-l}\bigg(\sum_{j\in{\cal M}}\alpha_{ij}e_{ji}[k]\\
+\sum_{j\in{\cal N}}\alpha_{ij}e_{ji}[k]\bigg),
\end{multline}
where $e_{ji}[k]=x_j[k]-x_i[k]$.}

Let $\xi_i[k]=(x_i^T[k],\D x_i^T[k],\cdots,\D^nx_i^T[k])^T$, $\Xi_L[k]=(\xi_1^T[k],\cdots,\xi_{M}^T[k])^T$ and $\Xi_F[k]=(\xi_{M+1}^T[k],$ $\cdots,\xi_{M+N}^T[k])^T$.
Substituting \eqref{eq:24} into \eqref{eq:23} obtains the following closed-loop dynamics
\
\begin{equation}\label{eq:36}
\begin{cases}
\D\Xi_L[k]=&(I_M\otimes A\otimes I_p)\Xi_L[k], \\
\D\Xi_F[k]=&\big(I_N\otimes A\otimes I_p\\
&-(I_N+D)^{-1}L_2\otimes BK\otimes I_p\big)\Xi_F[k]\\
&-\big((I_N+D)^{-1}L_1\otimes BK\otimes I_p\big)\Xi_L[k]\\
&+(I_N\otimes B \otimes I_p)\Delta[k],
\end{cases}
\end{equation}
where $D=\diag(d_{M+1},\cdots,d_{M+N})$, $\Delta[k]=\diag(\D^n\delta_{M+1}^T[k],\cdots,\D^n\delta_{M+N}^T[k])$,  \color{black} $K=(\kappa_{n-1},\cdots,\kappa_0)$; $A$ and $B$ are defined in \eqref{eq:3}.
It follows from \eqref{eq:36} that
\begin{multline}\label{eq:25}
 {{\hat \Xi}_F[k+1]=(I_N\otimes \hat A\otimes I_p-\hat L_2\otimes BK\otimes I_p)\hat\Xi_F[k]}\\
 {+(I_N\otimes B \otimes I_p)\Delta[k],}
\end{multline}
where $\hat \Xi_F[k]=\Xi_F[k]+(L_2^{-1}L_1\otimes I_{(n+1)p}) \Xi_L[k]$, $\hat A=A+I_{n+1}$ and $\hat L_2=(I_N+D)^{-1}L_2$.
By Lemma \ref{lem:Laplacian}, if every element in $\hat\Xi_F[k]$ is convergent to zero, then the containment problem is solved.
\
\begin{thm}\label{thm:dis_1}
	Assume all leaders move along their polynomial trajectories described by \eqref{eq:30}; and all followers are described by the single-integrator dynamics \eqref{eq:23}.
	Let $P$ denote the positive definite solution to the following matrix inequality
	\begin{equation}\label{eq:5}
		P>\hat A^TP\hat A-(1-\varepsilon^2)\hat A^TPB(B^TPB)^{-1}B^TP\hat A,
	\end{equation}
	where $\varepsilon\in(\max_{i\in\{1,2,\cdots,N\}}|1-\hat \lambda_i|,1)$, and $\{\hat \lambda_i; i=1,\cdots,N\}$ are the eigenvalues of $\hat L_2$.
	If the order of the polynomial disturbance $\delta_i[k]$ in \eqref{eq:23} is not greater than $n-1$ (namely, $r\le n-1$), then the containment problem in the discrete-time domain can be solved by the $PI^n$-type algorithm defined by \eqref{eq:24} with $K=(B^TPB)^{-1}B^TP\hat A$.
\end{thm}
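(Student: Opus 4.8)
The plan is to mirror the proof of Theorem~\ref{thm:con_1}, but to replace the continuous-time Hurwitz analysis by a discrete-time Schur analysis. First I would use the hypothesis $r\le n-1$: the $n$th difference of a polynomial of degree at most $n-1$ vanishes, so $\D^n\delta_i[k]=\textbf{0}_p$ and hence $\Delta[k]\equiv\textbf{0}$. The error recursion \eqref{eq:25} then collapses to the homogeneous linear system $\hat\Xi_F[k+1]=M\hat\Xi_F[k]$ with $M=I_N\otimes\hat A\otimes I_p-\hat L_2\otimes BK\otimes I_p$. Consequently $\hat\Xi_F[k]=M^k\hat\Xi_F[0]$, and the entire statement reduces to showing that $M$ is \emph{Schur stable} (spectral radius strictly less than one); once this holds, $\hat\Xi_F[k]\to\textbf{0}$, and Lemma~\ref{lem:Laplacian} together with Definition~\ref{def:1} yields containment.

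Next I would reduce $M$ to its diagonal blocks exactly as in Theorem~\ref{thm:con_1}. A Schur decomposition $T\hat L_2T^{-1}=\hat\Lambda$, with $\hat\Lambda$ upper triangular carrying the eigenvalues $\hat\lambda_1,\dots,\hat\lambda_N$ of $\hat L_2$ on its diagonal, conjugates $M$ by $T\otimes I_{(n+1)p}$ into $I_N\otimes\hat A\otimes I_p-\hat\Lambda\otimes BK\otimes I_p$, which is block upper triangular with diagonal blocks $(\hat A-\hat\lambda_iBK)\otimes I_p$. Since the factor $I_p$ only replicates the spectrum, it suffices to prove that each $\hat A-\hat\lambda_iBK$ is Schur for $i=1,\dots,N$.

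The core of the argument is a discrete-time Lyapunov inequality using the common certificate $P$ from \eqref{eq:5}. Writing $S_i=\hat A-\hat\lambda_iBK$ with $K=(B^TPB)^{-1}B^TP\hat A$, I would compute $S_i^HPS_i-P$. Because $B$ is the last coordinate vector, $B^TPB$ is a positive scalar and $K$ is a row vector, so all cross terms collapse onto the single rank-one positive semidefinite matrix $W\triangleq\hat A^TPB(B^TPB)^{-1}B^TP\hat A$, giving
\begin{equation}
S_i^HPS_i-P=\hat A^TP\hat A-P+\big(|1-\hat\lambda_i|^2-1\big)W.\nonumber
\end{equation}
Inequality \eqref{eq:5} is precisely $\hat A^TP\hat A-P-(1-\varepsilon^2)W<0$, and the identity $|1-\hat\lambda_i|^2-1=-(1-\varepsilon^2)-(\varepsilon^2-|1-\hat\lambda_i|^2)$ lets me split the right-hand side as a negative-definite matrix minus the positive-semidefinite matrix $(\varepsilon^2-|1-\hat\lambda_i|^2)W$, the coefficient being nonnegative because $|1-\hat\lambda_i|<\varepsilon$. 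Hence $S_i^HPS_i-P<0$, so each $S_i$ is Schur and therefore $M$ is Schur, completing the proof.

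The step I expect to be the main obstacle is the algebraic reduction in the third paragraph: one must verify that every cross term generated by the complex eigenvalue $\hat\lambda_i$ genuinely collapses onto the single real matrix $W$, which relies on the special structure of $B$ (so that $B^TPB$ is an invertible scalar) and on the Riccati-type choice of $K$. A secondary point to be checked at the outset is that the admissible interval $\big(\max_i|1-\hat\lambda_i|,1\big)$ for $\varepsilon$ is nonempty; this is the discrete-time analogue of the nonemptiness of $(0,\lambda_{\min})$ in Theorem~\ref{thm:con_1}, and it is exactly what the normalization factor $1/(1+d_i)$ in \eqref{eq:24} provides by pulling the eigenvalues of $\hat L_2=(I_N+D)^{-1}L_2$ into the open unit disk centered at~$1$.
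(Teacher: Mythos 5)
Your proposal is correct and follows essentially the same route as the paper's proof: use $r\le n-1$ to kill the disturbance, reduce via the Schur decomposition of $\hat L_2$ to the diagonal blocks $\hat A-\hat\lambda_i BK$, and certify each block with the discrete-time Lyapunov inequality built from \eqref{eq:5}; your rank-one collapse onto $W=\hat A^TPB(B^TPB)^{-1}B^TP\hat A$ is exactly the computation in the paper. The only items the paper adds are the two points you flagged at the end: nonemptiness of the $\varepsilon$-interval (proved there via the Gerschgorin circle theorem together with Lemma \ref{lem:Laplacian}) and the existence of the positive definite solution $P$ to \eqref{eq:5} (via a cited Riccati-inequality lemma, using that all unstable eigenvalues of $\hat A$ equal $1$).
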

\color{black}
\begin{proof}
	 Since $r\le n-1$, it can be obtained that $\D^n\delta_i[k]=\textbf{0}_p$. This together with \eqref{eq:25} leads to that
	\begin{equation*}
	\hat\Xi_F[k]=(I_N\otimes \hat A\otimes I_p-\hat L_2\otimes BK\otimes I_p)^k\hat\Xi_F[0].
	\end{equation*}

	By the Gerschgorin circle theorem and Lemma \ref{lem:Laplacian}, all eigenvalues of $\hat L_2$ are inside the open circle $U(1,1)\triangleq\{a+bj|a,b\in{\mathbb R}\ \text{and}\ (a-1)^2+b^2<1\}$.
	Therefore, the set $(\max_{i\in\{1,2,\cdots,N\}}|1-\lambda_i|,1)$ is not empty.
	By Lemma 5 in \cite{Kristian13Automatica}, we know that the matrix inequality \eqref{eq:5} has a positive definite solution $P$ as long as $|\varepsilon|<1/\prod_i|\lambda_i^u(\hat A)|$, where $\{\lambda_i^u\}$ are the unstable eigenvalues of $\hat A$.
	Since all unstable eigenvalues of $\hat A$ are $1$, \eqref{eq:5} has a positive definite solution $P$.
	
	Next, it is proved all eigenvalues of $\hat A-\hat \lambda_iBK$ are inside the unit circle.
	From \eqref{eq:5}, it can be calculated that
	\begin{IEEEeqnarray}{lll}
		&(\hat A-\hat \lambda_i BK)^HP(\hat A-\hat \lambda_i BK)-P\IEEEnonumber\\
		=&(\hat A^TP\hat A-P)-(\hat \lambda_i+\bar {\hat\lambda}_i-\hat \lambda_i\bar{\hat \lambda}_i )\hat A^TPB(B^TPB)^{-1}B^TP\hat A\IEEEnonumber\\
		<&(1-\delta^2)\hat A^TPB(B^TPB)^{-1}B^TP\hat A\IEEEnonumber\\
		&-(1-(1-\hat \lambda_i)(1-\bar {\hat \lambda}_i))\hat A^TPB(B^TPB)^{-1}B^TP\hat A\IEEEnonumber\\
		=&(|1-\hat \lambda_i|^2-\varepsilon^2)\hat A^TPB(B^TPB)^{-1}B^TP\hat A\le0.\nonumber
	\end{IEEEeqnarray}
	By Lyapunov stability theory, all eigenvalues of $\hat A-\hat \lambda_iBK$ are inside the unit circle.

	By Schur decomposition, there must exist a transformation matrix $T$ such that
	\begin{equation}
	\Lambda\triangleq T\hat L_2T^{-1}=
	\begin{bmatrix}
	\lambda_1& * &\cdots&* \\
	0 &\lambda_2 & \cdots&*\\
	\vdots &\vdots&\ddots&\vdots\\
	0 & 0&\cdots&\lambda_N
	\end{bmatrix}.\nonumber
	\end{equation}
	Hence,
		\begin{multline}
		I_N\otimes \hat A\otimes I_p-\hat L_2\otimes BK\otimes I_p\\
		=((T^{-1}\otimes I_{n})(I_N\otimes \hat A-\Lambda\otimes BK)(T\otimes I_{n}))\otimes I_p.
		\end{multline}
	The diagonal elements of $I_N\otimes \hat A-\Lambda\otimes BK$ are $\hat A-\hat \lambda_iBK$ ($i=1,\cdots,N$).
	Therefore, all eigenvalues of $I_N\otimes \hat A-\Lambda\otimes BK$ are inside the unit circle.
	Hence, there must exist two positive constants $M_3<\infty$ and $\beta_3\in(0,1)$ such that
	$\|(I_N\otimes \hat A\otimes I_p-\hat L_2\otimes BK\otimes I_p)^k\|_2\le M_3\beta_3^k$ and
	$\|\hat\Xi_F[k]\|_2\le M_3\beta_3^k\|\Xi_F(0)\|_2\to0\ (k\to\infty)$.
\end{proof}

In \eqref{eq:24}, the coefficient $1/(1+d_i)$ is used to normalize the Laplacian matrix.
If we replace $1/(1+d_i)$ with a uniform constant $\mu$, then the algorithm \eqref{eq:24} is modified as
\begin{equation}\label{eq:8}
	u_i[k]=\mu\sum_{l=0}^{n}\kappa_l\D^{-l}\bigg(\sum_{j\in{\cal M}}\alpha_{ij}e_{ji}[k]+\sum_{j\in{\cal N}}\alpha_{ij}e_{ji}[k]\bigg).
\end{equation}

Let $\{\lambda_1,\cdots,\lambda_N\}$ denote the eigenvalues of $L_2$.
Then by Lemma \ref{lem:Laplacian}, we know $\Re(\lambda_i)>0$.
Hence, there must exist a positive constant $\mu_{\min}\in(0,1)$ such that $\max_{i}|1-\mu_{\min}\lambda_i|<1$.
Following the same procedure of Theorem \ref{thm:dis_1}, the following corollary can be easily obtained.
\begin{coro}
	Let $P$ denote the positive definite solution to the following matrix inequality
	\begin{equation}
	P>\hat A^TP\hat A-(1-\gamma^2)\hat A^TPB(B^TPB)^{-1}B^TP\hat A,
	\end{equation}	
	where $\gamma\in(\varepsilon_{\max},1)$ and $\varepsilon_{\max}=\max_{i}\{|1-\mu_{\min}\lambda_i|\}<1$.
	The containment problem of MASs in the discrete time domain can be solved by the algorithm \eqref{eq:8} with $K=(B^TPB)^{-1}B^TP\hat A$ and $\mu=\mu_{\min}$.
\end{coro}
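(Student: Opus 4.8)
The plan is to replicate the argument of Theorem \ref{thm:dis_1} almost verbatim, with the single structural change that the agent-wise normalizing factor $1/(1+d_i)$ is everywhere replaced by the uniform scalar $\mu$. First I would substitute \eqref{eq:8} into the single-integrator dynamics \eqref{eq:23} and carry out the same stacking manipulations that produced \eqref{eq:36} and \eqref{eq:25}. Because $\mu$ is now a common scalar, $(I_N+D)^{-1}L_2$ simply collapses to $\mu L_2$, so the closed-loop error recursion takes the form
\begin{equation*}
\hat\Xi_F[k+1]=(I_N\otimes\hat A\otimes I_p-\mu L_2\otimes BK\otimes I_p)\hat\Xi_F[k]+(I_N\otimes B\otimes I_p)\Delta[k],
\end{equation*}
with $\hat A=A+I_{n+1}$ and $\hat\Xi_F[k]=\Xi_F[k]+(L_2^{-1}L_1\otimes I_{(n+1)p})\Xi_L[k]$. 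Under the disturbance condition $r\le n-1$ inherited from Theorem \ref{thm:dis_1}, one has $\D^n\delta_i[k]=\textbf{0}_p$, so $\Delta[k]$ vanishes and the error obeys the autonomous recursion $\hat\Xi_F[k]=(I_N\otimes\hat A\otimes I_p-\mu L_2\otimes BK\otimes I_p)^k\hat\Xi_F[0]$.

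The only genuinely new point, and the step I expect to be the main obstacle, is that the eigenvalues of $\mu L_2$ must be placed inside the open disk $U(1,1)$. In Theorem \ref{thm:dis_1} this was delivered for free by the Gerschgorin circle theorem applied to the row-normalized matrix $\hat L_2$; here, without that normalization, it must instead be extracted from the positivity of $\Re(\lambda_i)$ guaranteed by Lemma \ref{lem:Laplacian}. A direct computation gives $|1-\mu\lambda_i|^2=1-2\mu\Re(\lambda_i)+\mu^2|\lambda_i|^2$, which is strictly less than $1$ whenever $0<\mu<2\Re(\lambda_i)/|\lambda_i|^2$. Choosing $\mu_{\min}$ below $\min_i 2\Re(\lambda_i)/|\lambda_i|^2$ therefore forces $\varepsilon_{\max}=\max_i|1-\mu_{\min}\lambda_i|<1$ and makes the interval $(\varepsilon_{\max},1)$ nonempty, so a valid $\gamma$ exists. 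This is precisely the assertion stated just before the Corollary, and it is the crux of the whole proof, since it is the only place where the uniform $\mu$ must be tuned by hand rather than inherited from the structure of $L_2$.

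Next I would settle solvability of the matrix inequality in the statement. Since $A$ is nilpotent, all eigenvalues of $\hat A=A+I_{n+1}$ equal $1$, hence every unstable eigenvalue of $\hat A$ has modulus $1$ and $\prod_i|\lambda_i^u(\hat A)|=1$. Lemma 5 of \cite{Kristian13Automatica} then furnishes a positive-definite $P$ solving the inequality precisely because $\gamma<1$. With $K=(B^TPB)^{-1}B^TP\hat A$ and $\mu=\mu_{\min}$, I would then run the Lyapunov identity of Theorem \ref{thm:dis_1} with $\hat\lambda_i$ replaced by $\mu_{\min}\lambda_i$; it collapses to
\begin{equation*}
(\hat A-\mu_{\min}\lambda_i BK)^HP(\hat A-\mu_{\min}\lambda_i BK)-P=(|1-\mu_{\min}\lambda_i|^2-\gamma^2)\hat A^TPB(B^TPB)^{-1}B^TP\hat A\le 0,
\end{equation*}
the inequality holding because $|1-\mu_{\min}\lambda_i|\le\varepsilon_{\max}<\gamma$. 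Thus each diagonal block $\hat A-\mu_{\min}\lambda_i BK$ is Schur stable.

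Finally I would triangularize $L_2$ by a Schur transformation $T$, so that $I_N\otimes\hat A-\mu_{\min}L_2\otimes BK$ is block upper-triangular with diagonal blocks $\hat A-\mu_{\min}\lambda_i BK$; Schur stability of each block yields Schur stability of the full matrix, hence constants $M<\infty$ and $\beta\in(0,1)$ with $\|\hat\Xi_F[k]\|_2\le M\beta^k\|\hat\Xi_F[0]\|_2\to 0$. Convergence of $\hat\Xi_F[k]$ to zero then solves the containment problem by Lemma \ref{lem:Laplacian}. Everything after the scaling step of the second paragraph is a routine transcription of the Theorem \ref{thm:dis_1} proof, so the only real work lies in verifying that $\mu_{\min}$ exists and keeps the spectrum of $\mu_{\min}L_2$ inside $U(1,1)$.
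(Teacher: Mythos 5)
Your proposal is correct and takes essentially the same approach as the paper: the paper's own argument is precisely to note (via Lemma \ref{lem:Laplacian}) that some $\mu_{\min}\in(0,1)$ makes $\max_i|1-\mu_{\min}\lambda_i|<1$ and then to repeat the procedure of Theorem \ref{thm:dis_1} with $\hat L_2$ replaced by $\mu_{\min}L_2$, and your explicit condition $\mu<2\Re(\lambda_i)/|\lambda_i|^2$ supplies the one detail the paper leaves implicit. The only slip is notational: the displayed Lyapunov relation should be the strict chain $(\hat A-\mu_{\min}\lambda_i BK)^HP(\hat A-\mu_{\min}\lambda_i BK)-P<\left(|1-\mu_{\min}\lambda_i|^2-\gamma^2\right)\hat A^TPB(B^TPB)^{-1}B^TP\hat A\le 0$, where the ``$<$'' comes from substituting the strict matrix inequality on $P$; written as an equality, the right-hand side is only a rank-one negative semidefinite matrix, which by itself would not yield the negative definiteness needed for Schur stability of each block.
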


 {
	\begin{remark}
		Compared to the previous results \cite{Cao12Automatica,Cao11TCST,Li12TAC,Zhang14SCL,Mei12Automatica,Meng10Automatica}, one distinguished feature of this paper is that the proposed algorithms do not employ the ``sign'' function.
		One limitation of the ``sign'' function is that it can cause the harmful ``chattering'' phenomenon. Therefore, the proposed algorithm can avoid the high-frequency control switches occurred in the ``chattering'' phenomenon.
		Moreover, the ``sign'' function can hardly be used in the discrete-time domain. The current literature rarely discusses the containment problem with dynamic leaders in the discrete-time domain, and this paper gives a preliminary attempt.
	\end{remark}
}
	
\subsection{Containment Control of MASs with Measurement Noises}\label{sec:noise}
In the above sections, it is assumed that all followers can accurately measure the relative states between themselves and their neighbors.
However, the measurement noise is unavoidable in practice.
In this subsection, it is assumed that the relative state $e_{ji}[k]$ is corrupted by the measurement noise $\rho_{ji}\eta_{ji}[k]$, where $\rho_{ji}=\diag(\rho_{ji1},\cdots,\rho_{jip})$, $\rho_{ijs}<\infty\ (s=1,\cdots,p)$ denotes the noise intensity; and $\eta_{ij}[k]\in{\mathbb R}^p$ is the standard white noise vector.
Moreover, it is assumed that $\{\eta_{ji}[k]|j\in{\cal M\cup N};i\in{\cal N}\}$ are mutually independent.

Denote $\bar e_{ji}[k]=e_{ji}[k]+\rho_{ji}\eta_{ji}[k]$.
The containment $PI^n$-type algorithm \eqref{eq:24} is modified as
\begin{multline}\label{eq:12}
u_i[k]=\frac{1}{1+d_i}\sum_{l=0}^{n}\kappa_l\D^{-l}\bigg(\sum_{j\in{\cal M}}\alpha_{ij}\bar e_{ji}[k]\\
+\sum_{j\in{\cal N}}\alpha_{ij}\bar e_{ji}[k]\bigg).
\end{multline}
The definition of the containment problem should also be modified to take the noise effect into account.
\begin{definition}\label{def:2}
	The containment problem of MASs with measurement noises is solved in the stochastic sense if
$\lim_{t\to\infty}\dis(\E(x_i[k]),co_L[k])=0$
	and $\E(\dis(x_i[k],co_L[k]))^2<\infty,\ i\in{\cal N}$.
\end{definition}

\begin{thm}
	Assume all leaders move along their polynomial trajectories described by \eqref{eq:30}; and all followers are described by the single-integrator dynamics \eqref{eq:23}.
	If the order of the polynomial disturbance $\delta_i[k]$ in \eqref{eq:23} is not greater than $n-1$ (namely, $r\le n-1$), then the containment algorithm \eqref{eq:12} with $K=(B^TPB)^{-1}B^TP\hat A$ ($P$ is defined in Theorem \ref{thm:dis_1}) can solve the containment problem of MASs with measurement noises in the stochastic sense.
\end{thm}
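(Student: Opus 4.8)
The plan is to treat the noisy closed loop as the Schur-stable deterministic system of Theorem~\ref{thm:dis_1} driven by an additive zero-mean stochastic input, and then to control the first and second moments of the containment error separately. First I would substitute \eqref{eq:12} into \eqref{eq:23} and lift to the augmented state $\xi_i[k]=(x_i^T[k],\D x_i^T[k],\cdots,\D^n x_i^T[k])^T$ exactly as in the derivation of \eqref{eq:25}. Since $\bar e_{ji}[k]=e_{ji}[k]+\rho_{ji}\eta_{ji}[k]$ differs from $e_{ji}[k]$ only by the noise term, the deterministic feedback is untouched and one obtains
\begin{equation}
\hat\Xi_F[k+1]=\big(I_N\otimes\hat A\otimes I_p-\hat L_2\otimes BK\otimes I_p\big)\hat\Xi_F[k]+(I_N\otimes B\otimes I_p)\big(\Delta[k]+W[k]\big),\nonumber
\end{equation}
where $\hat\Xi_F[k]$, $\hat A$, $\hat L_2$, $K$ are as in Theorem~\ref{thm:dis_1}, and the $(i-M)$th block of $W[k]$ is the noise contribution $\frac{1}{1+d_i}\sum_{l=0}^{n}\kappa_l\D^{n-l}\big(\sum_{j}\alpha_{ij}\rho_{ji}\eta_{ji}[k]\big)$ fed through the $B$-channel (the $\D^{-l}$ in \eqref{eq:12} becomes $\D^{n-l}$ after lifting). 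Because $r\le n-1$ forces $\Delta[k]=\textbf{0}$, and because $W[k]$ is a fixed finite linear combination of the unit-variance white-noise vectors $\eta_{ji}[\cdot]$ with bounded coefficients $\kappa_l,\alpha_{ij},\rho_{ji},(1+d_i)^{-1}$, it is zero-mean with a variance bound $\E\|W[k]\|_2^2\le C<\infty$ that is uniform in $k$. Write $\Phi\triangleq I_N\otimes\hat A\otimes I_p-\hat L_2\otimes BK\otimes I_p$, which the proof of Theorem~\ref{thm:dis_1} shows satisfies $\|\Phi^k\|_2\le M_3\beta_3^k$ with $\beta_3\in(0,1)$.

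For the first moment I would simply take expectations. Since $\E(W[k])=\textbf{0}$ and the leader trajectories \eqref{eq:30} (hence $co_L[k]$) are deterministic, linearity gives $\E(\hat\Xi_F[k])=\Phi^k\E(\hat\Xi_F[0])$, so $\|\E(\hat\Xi_F[k])\|_2\le M_3\beta_3^k\|\E(\hat\Xi_F[0])\|_2\to0$. Reading off the position block of $\hat\Xi_F[k]=\Xi_F[k]+(L_2^{-1}L_1\otimes I_{(n+1)p})\Xi_L[k]$ and using the nonnegativity and unit row-sum property of $-L_2^{-1}L_1$ from Lemma~\ref{lem:Laplacian}, the corresponding point lies in $co_L[k]$, so $\dis(\E(x_i[k]),co_L[k])\le\|\E(\hat\Xi_F[k])\|_2\to0$, giving the first requirement of Definition~\ref{def:2}.

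For the second moment I would use the explicit solution
\begin{equation}
\hat\Xi_F[k]=\Phi^k\hat\Xi_F[0]+\sum_{s=0}^{k-1}\Phi^{k-1-s}(I_N\otimes B\otimes I_p)W[s]\nonumber
\end{equation}
and bound its $L^2$ norm by Minkowski's inequality, $\big(\E\|\hat\Xi_F[k]\|_2^2\big)^{1/2}\le\|\Phi^k\|_2\,\|\hat\Xi_F[0]\|_2+C_B\sum_{s=0}^{k-1}\|\Phi^{k-1-s}\|_2\,\big(\E\|W[s]\|_2^2\big)^{1/2}$, with $C_B\triangleq\|I_N\otimes B\otimes I_p\|_2$. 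Inserting $\|\Phi^k\|_2\le M_3\beta_3^k$ and $\E\|W[s]\|_2^2\le C$ dominates the sum by a convergent geometric series, yielding $\sup_k\E\|\hat\Xi_F[k]\|_2^2\le\big(M_3\|\hat\Xi_F[0]\|_2+M_3 C_B\sqrt{C}/(1-\beta_3)\big)^2<\infty$. Since $\dis(x_i[k],co_L[k])\le\|\hat\Xi_F[k]\|_2$ by the same Lemma~\ref{lem:Laplacian} argument, this gives $\E(\dis(x_i[k],co_L[k]))^2<\infty$, the second requirement of Definition~\ref{def:2}.

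The delicate point I expect is that the summation operators $\D^{-l}$ become difference operators $\D^{n-l}$ after lifting, so $W[k]$ is not white but a finite-memory moving average of the $\eta_{ji}$'s, whose terms are correlated across time and are even correlated with $\hat\Xi_F[k]$. I would deliberately avoid propagating a covariance recursion: the Minkowski-inequality route needs only the uniform variance bound $\E\|W[s]\|_2^2\le C$ together with the geometric decay of $\|\Phi^k\|_2$, and therefore sidesteps all cross-correlation bookkeeping entirely.
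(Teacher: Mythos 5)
Your proposal is correct, and its skeleton matches the paper's: lift to $\hat\Xi_F[k]$, write the solution as the decaying homogeneous part plus a stochastic convolution, then verify the two requirements of Definition \ref{def:2} separately (your first-moment step, resting on $\E(W[s])=\textbf{0}$, is in substance the paper's Lemma \ref{lem:r3}). Where you genuinely diverge is the second-moment bound. The paper writes the noise convolution as $Y_2[k]=\sum_{i=0}^{k-1}R_3^{k-i-1}R_4V[i]$, with $V[i]$ stacking $(\eta_{ji}[i],\D\eta_{ji}[i],\cdots,\D^n\eta_{ji}[i])$, and evaluates $\E(Y_2[k]Y_2^T[k])$ as a double sum; the key input is Lemma \ref{lem:r2} (built on the binomial identity of Lemma \ref{lem:r1}), which says this moving-average noise has correlations uniformly bounded for lags up to $n$ and vanishing beyond lag $n$, so only the near-diagonal terms $|i-j|\le n$ survive and the sum is dominated by a convergent geometric expression. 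You instead apply Minkowski's inequality in $L^2$ termwise to the convolution, which requires nothing about the correlation structure — only the uniform variance bound $\E\|W[s]\|_2^2\le C$ and the geometric decay $\|\Phi^k\|_2\le M_3\beta_3^k$. That makes your argument shorter and more robust: it dispenses with Lemmas \ref{lem:r1}--\ref{lem:r2} entirely, covers any noise with uniformly bounded second moments (not just finite-memory averages of white noise), and, as you note, sidesteps the cross-correlations introduced by the forward differences, including that between $W[k]$ and $\hat\Xi_F[k]$ (which the paper avoids by the pointwise inequality $\|a+b\|_2^2\le 2\|a\|_2^2+2\|b\|_2^2$ rather than by independence). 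What the paper's explicit covariance computation buys in exchange is finer information: a quantitative bound on $\E(Y_2[k]Y_2^T[k])$ itself, of the kind one would need to estimate the steady-state containment-error variance rather than merely establish its finiteness. Both routes are sound and complete the proof.
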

\begin{proof}
	Since $r\le n-1$, the closed-loop dynamics \eqref{eq:25} can be rewritten as
	\begin{multline}\label{eq:10}
	{\hat \Xi}_F[k+1]=(I_N\otimes \hat A\otimes I_p-\hat L_2\otimes BK\otimes I_p)\hat\Xi_F[k]\\+((I_N+D)^{-1}\Gamma\otimes BK\otimes I_p) V[k],
	\end{multline}
	where $\Gamma=\diag(\Gamma_{M+1},\cdots,\Gamma_{M+N})$, $V[k]=$ $(V_{M+1}^T[k],\cdots,V_{M+N}^T[k])^T$,  $\Gamma_i=(\alpha_{i1}\rho_{1i},\cdots,$ $\alpha_{i(M+N)}\rho_{(M+N)i})$, $V_i[k]=$ $(\nu_{1i}^T[k],\cdots,$ $\nu_{(M+N)i}^T[k])^T$, and $\nu_{ji}[k]=(\eta_{ji}[k],\D \eta_{ji}[k],\cdots,\D^n\eta_{ji}[k])^T$.
	
	Let $R_3=I_N\otimes \hat A\otimes I_p-\hat L_2\otimes BK\otimes I_p$ and $R_4=(I_N+D)^{-1}\Gamma\otimes BK\otimes I_p$.
	By \eqref{eq:10}, it is obtained that
	\begin{equation}
	{\hat \Xi}_F[k]=R_3^k{\hat \Xi}_F[0]+\sum_{i=0}^{k-1}R_3^{k-i-1}R_4V[i].
	\end{equation}
	From the proof of Theorem \ref{thm:dis_1}, we know that there exist two finite positive constants $M_3$ and $\beta_3\in(0,1)$ such that $\|R_3^k\|_F\le M_3\beta_3^k$.
	Hence $\lim_{k\to\infty}R_3^k{\hat \Xi}_F[0]=\textbf{0}_{p(n+1)N}$.
	By Lemma \ref{lem:r3}, we know that $\E(\sum_{i=0}^{k-1}R_3^{k-i-1}R_4V[i])=\sum_{i=0}^{k-1}R_3^{k-i-1}R_4\E(V[i])=\textbf{0}_{p(n+1)N}$.
	Therefore,
	\begin{equation}
	\lim_{k\to\infty}\E(\hat \Xi_F[k])=\lim_{k\to\infty}R_3^k{\hat \Xi}_F[0]=\textbf{0}_{p(n+1)N},
	\end{equation}
	which together with Lemma \ref{lem:Laplacian} leads to that
	$\lim_{t\to\infty}\dis(\E(x_i[k]),co_L[k])=0,\ i\in{\cal N}$.

	Let $Y_1[k]=R_3^k{\hat \Xi}_F[0]$ and $Y_2[k]=\sum_{i=0}^{k-1}R_3^{k-i-1}R_4V[i]$. Since $\lim_{k\to\infty}\|Y_1[k]\|_2=0$, there must exist a finite positive constant $\hat M_3$ such that $\|Y_1[k]\|_2<\hat M_3$.
	By Lemma \ref{lem:r2}, we know that $\forall m\in\{0,1,2\cdots,n\}$, there exists a finite positive constant $\bar M_3$ such that
	$\|\E(V[k]V^T[k+m])\|_F<\bar M_3$.
	
	And for $\forall m\ge n+1$,
	$\E(V[k]V^T[k+m])=\textbf{0}_{N(M+N)(n+1)p\times N(M+N)(n+1)}$ holds. Therefore,
	\begin{align*}
	&\|\E(Y_2[k]Y_2^T[k])\|_F\\
	&=\left\|\E \sum_{i=0}^{k-1}\sum_{j=0}^{k-1}R_3^{k-i-1}R_4V[i]V^T[j]R_4^T{R_3^{k-j-1}}^T\right\|_F\\
	&=\left\|\E\sum_{i=0}^{k-1}\sum_{j=\max\{0,i-n\}}^{\min\{k-1,i+n\}}\!\!\! R_3^{k-i-1}R_4V[i]V^T[j]R_4^T{R_3^{k-j-1}}^T\right\|_F\\
	&\le \sum_{i=0}^{k-1}\sum_{j=i-n}^{i+n}\!\!\!\! \|R_3^{k-i-1}\|_F\|R_4\|_F^2\|\E(V[i]V^T[j])\|_F\|{R_3^{k-j-1}}\|_F\\
	& \le M_3^2\bar M_3\|R_4\|_F^2\sum_{i=0}^{k-1}\sum_{j=i-n}^{i+n}\beta_3^{2k-i-j-2}\\
	&=M_3^2\bar M_3\|R_4\|_F^2 \frac{(1-\beta_3^{2k})(1-\beta_3^{2n+1})}{\beta_3^{n}(1-\beta_3)(1-\beta_3^2)}<\infty,
	\end{align*}
	which leads to that  $\E\|Y_2[k]\|_2^2\le \|\E(Y_2[k]Y_2^T[k])\|_F^2<\infty$.
	Therefore,
	\begin{equation}
	\E\|\hat \Xi_F[k]\|_2^2\le 2 \E\|Y_1[k]\|_2^2+2\E\|Y_1[k]\|_2^2<\infty,\nonumber
	\end{equation}
	which together with Lemma \ref{lem:Laplacian} leads to that $\E(\dis(x_i[k],co_L[k]))^2<\infty,\ i\in{\cal N}$.
\end{proof}

\color{black}
\subsection{Followers with High-Order Integral Dynamics}\label{sec:con_h}

In this subsection, the dynamics of the $(i-M)$th follower ($i\in{\cal N}$) is described by the following high-order difference equation
\begin{equation}\label{eq:32}
\D^{m}x_i[k]=u_i[k],
\end{equation}
where $x_i[k]\in{\mathbb R}^p$ is the position vector of the $(i-M)$th follower; and $u_i[k]\in{\mathbb R}^p$ is the control input.

Motivated by the algorithm \eqref{eq:2}, the following $PI^{l_m-m}D^{m-1}$-type containment algorithm is proposed
\begin{multline}\label{eq:49}
u_i[k]=\frac{1}{1+d_i}\sum_{l=0}^{l_m-1}\kappa_l\D^{m-l-1}\bigg(\sum_{j\in{\cal M}}\alpha_{ij}e_{ji}[k]\\
+\sum_{j\in{\cal N}}\alpha_{ij}e_{ji}[k]\bigg),
\end{multline}
where $l_m=\max\{n+1,m\}$.

\begin{thm}\label{thm:dis_2}
	Assume all leaders move along their polynomial trajectories described by \eqref{eq:30}; and all followers are described by the high-order integral dynamics \eqref{eq:32}.
	Let $P$ denote the positive definite solution to the following matrix inequality
	\begin{equation}
	P>\hat E^TP\hat E-(1-\varepsilon^2)\hat E^TPF(F^TPF)^{-1}F^TP\hat E, \nonumber
	\end{equation}
	where $\varepsilon\in(\max_{i\in\{1,2,\cdots,N\}}|1-\hat \lambda_i|,1)$, $\hat E=I_{l_m}+E$, $E$ and $F$ are defined in Theorem \ref{thm:con_2}.
	The containment problem can be solved by \eqref{eq:49} with $K=(\kappa_{l_m-1},\cdots,\kappa_0)=(F^TPF)^{-1}F^TP\hat E$.
\end{thm}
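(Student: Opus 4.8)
The plan is to mirror the proof of Theorem \ref{thm:dis_1} almost verbatim, replacing the single-integrator data $(\hat A,B,n+1)$ with the high-order data $(\hat E,F,l_m)$, after first reducing the mixed proportional/integral/differential controller \eqref{eq:49} to a clean state-space recursion. To set this up I would augment the state as $\xi_i[k]=(x_i^T[k],\D x_i^T[k],\cdots,\D^{l_m-1}x_i^T[k])^T$ and stack $\Xi_L[k],\Xi_F[k]$ together with $\hat\Xi_F[k]=\Xi_F[k]+(L_2^{-1}L_1\otimes I_{l_mp})\Xi_L[k]$ exactly as in Section \ref{sec:con_1}. Since each leader follows an $n$th-order polynomial with $n\le l_m-1$, we have $\D^{l_m}x_i[k]=\textbf{0}_p$ for $i\in{\cal M}$, so $\xi_i[k+1]=(\hat E\otimes I_p)\xi_i[k]$. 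The crucial reduction is to apply $\D^{l_m-m}$ to the follower dynamics \eqref{eq:32}: this turns $\D^m x_i[k]=u_i[k]$ into $\D^{l_m}x_i[k]=\D^{l_m-m}u_i[k]$, and substituting \eqref{eq:49} replaces each term $\kappa_l\D^{m-l-1}(\cdot)$ by $\kappa_l\D^{l_m-l-1}(\cdot)$ with $l_m-l-1\ge 0$ for every $l=0,\cdots,l_m-1$. Hence each converted term is a nonnegative-order difference of the relative states, i.e.\ a component of $\xi_j-\xi_i$, while the $1/(1+d_i)$ factor assembles the follower block into the normalized Laplacian $\hat L_2=(I_N+D)^{-1}L_2$. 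Performing the same manipulation that takes \eqref{eq:36} to \eqref{eq:25} (now with no disturbance term present) yields
\[
\hat\Xi_F[k+1]=(I_N\otimes\hat E\otimes I_p-\hat L_2\otimes FK\otimes I_p)\hat\Xi_F[k].
\]
By Lemma \ref{lem:Laplacian} it then suffices to show this iteration matrix is Schur stable.

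Next I would establish that the prescribed $P$ exists. By the Gerschgorin circle theorem together with Lemma \ref{lem:Laplacian}, all eigenvalues $\hat\lambda_i$ of $\hat L_2$ lie in the open disc $U(1,1)$, so $\max_i|1-\hat\lambda_i|<1$ and the interval $(\max_i|1-\hat\lambda_i|,1)$ from which $\varepsilon$ is taken is nonempty. Because $E$ is nilpotent, $\hat E=I_{l_m}+E$ has every eigenvalue equal to $1$; its only unstable eigenvalues are $1$, so $\prod_i|\lambda_i^u(\hat E)|=1>\varepsilon$, and Lemma 5 of \cite{Kristian13Automatica} guarantees a positive definite solution $P$ to the stated matrix inequality.

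Finally I would repeat the discrete Lyapunov computation of Theorem \ref{thm:dis_1} with $(\hat A,B)$ replaced by $(\hat E,F)$. Using $K=(F^TPF)^{-1}F^TP\hat E$ and the matrix inequality, one obtains
\[
(\hat E-\hat\lambda_i FK)^HP(\hat E-\hat\lambda_i FK)-P\le(|1-\hat\lambda_i|^2-\varepsilon^2)\hat E^TPF(F^TPF)^{-1}F^TP\hat E\le 0,
\]
so each $\hat E-\hat\lambda_i FK$ $(i=1,\cdots,N)$ has all its eigenvalues inside the unit circle. Schur-decomposing $\hat L_2=T^{-1}\Lambda T$ block-triangularizes $I_N\otimes\hat E-\hat L_2\otimes FK$ with diagonal blocks $\hat E-\hat\lambda_i FK$, so its entire spectrum lies inside the unit circle. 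Consequently there exist constants $M<\infty$ and $\beta\in(0,1)$ with $\|\hat\Xi_F[k]\|_2\le M\beta^k\|\hat\Xi_F[0]\|_2\to0$, and Lemma \ref{lem:Laplacian} yields the containment claim.

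The step I expect to be the main obstacle is the state-space reduction in the first paragraph: one must check that the single application of $\D^{l_m-m}$ simultaneously clears the orders of both the integral block ($l\ge m$) and the differential block ($l<m-1$) of \eqref{eq:49}, and that the $(I_N+D)^{-1}$ normalization combines with $L_1,L_2$ to produce exactly $\hat L_2$ together with the leader-coupling term that is then absorbed by the change of variables defining $\hat\Xi_F[k]$. Once the recursion above is in hand, the remainder is identical to the argument of Theorem \ref{thm:dis_1}.
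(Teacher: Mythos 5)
Your proposal is correct and takes essentially the same approach as the paper: the paper's own proof merely defines the augmented state $\xi_i[k]=(x_i^T[k],\D x_i^T[k],\cdots,\D^{l_m-1}x_i^T[k])^T$ and invokes ``the same procedure as Theorem~\ref{thm:dis_1},'' which is precisely the reduction and Schur-stability argument you spell out (applying $\D^{l_m-m}$ to \eqref{eq:32}, assembling the recursion $\hat\Xi_F[k+1]=(I_N\otimes\hat E\otimes I_p-\hat L_2\otimes FK\otimes I_p)\hat\Xi_F[k]$, and rerunning the Gerschgorin/Lyapunov/Schur-decomposition steps with $(\hat A,B)$ replaced by $(\hat E,F)$). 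The only nitpick is that the first inequality in your Lyapunov chain should be strict (it follows from the strict matrix inequality defining $P$), which is what yields eigenvalues strictly inside the unit circle rather than merely on or inside it.
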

\color{black}
\begin{proof}
	Let  $\xi_i[k]=(x_i^T[k],\D x_i^T[k],\cdots,$ $\D^{l_m-1} x_i^T[k])^T$, $\Xi_L[k]=(\xi_1^T[k],\cdots,\xi_{M}^T[k])^T$, $\Xi_F[k]=(\xi_{M+1}^T[k],\cdots,\xi_{M+N}^T[k])^T$ and $\hat \Xi[k]=\Xi_F[k]+(L_2^{-1} L_1\otimes I_{l_mp})\Xi_L[k]$.
	 {
	Following the same procedure of the proof of Theorem \ref{thm:dis_1}, it can proved that there exist two positive constants $M_4<\infty$ and $\beta_4\in(0,1)$ such that $\|(I_N\otimes \hat E-\hat L_2\otimes FK)^k\|_2\le M_4\beta_4^k$ and $\lim_{k\to\infty}\|\hat\Xi[k]\|_2\le \lim_{k\to\infty} M_4\beta_4^k\|\hat \Xi[0]\|_2=0$.}
\end{proof}

If  {$\{\D e_{ji}[k],\cdots,\D^{m-1}e_{ji}[k]\}$} in \eqref{eq:49} are not available for the algorithm design, motivated by \eqref{eq:48}, the following estimator is designed to estimate the $(i-M)$th follower's position and the position's differences up to the $(m-1)$th-order, $i\in{\cal N}$.
\begin{multline}\label{eq:47}
\D z_i[k]=(\bar E\otimes I_p)z_i[k]+(\bar F\otimes I_p)u_i[k]+\frac{(K_e\otimes I_p)}{1+d_i}\\
\times\sum_{j=1}^{M+N}\alpha_{ij}\Big((G\otimes I_p)(z_j[k]-z_i[k])
-e_{ji}[k]\Big),
\end{multline}
where $z_i[k]=(z_{i1}^T[k],\cdots,z_{im}^T[k]^T)^T$;  $K_e$, $\bar E$, $\bar F$ and $G$ are defined in \eqref{eq:48}.

\begin{lemma}\label{lem:2}
	Let $K_e=\tilde EPG^T(GPG^T)^{-1}$ where $P$ is the positive definite solution to the following matrix inequality
	\begin{equation*}
	P>\tilde EP\tilde E^T-(1-\varepsilon^2)\tilde EPG^T(GPG^T)^{-1}GP\tilde E^T,
	\end{equation*}
	where $\varepsilon\in(\max_{i\in\{1,2,\cdots,N\}}|1-\hat \lambda_i|,1)$ and $\tilde E=I_m+\bar E$.
	Then there exist two positive constants $M_5<\infty$ and $\beta_5\in(0,1)$ such that $\|\D^{l_m-m}\hat Z[k]\|_2\le M_5\beta_5^k$, where $\hat Z[k]=(\hat z_{M+1}^T[k],\cdots,\hat z_{M+N}^T[k])^T$, $\hat z_i[k]=z_i[k]-\zeta_i[k]$ and $\zeta_i[k]=(x_i^T[k],\D x_i^T[k],\cdots,\D^{m-1}x_i^T[k])^T$.
\end{lemma}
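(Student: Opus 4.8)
The plan is to mirror the argument used for the continuous-time estimator in Lemma~\ref{lem:1}, but carried out in the discrete-time, normalized setting of Theorem~\ref{thm:dis_1}. First I would write down the error dynamics for $\hat z_i[k]=z_i[k]-\zeta_i[k]$. Observing that $\zeta_i[k]=(x_i^T[k],\D x_i^T[k],\cdots,\D^{m-1}x_i^T[k])^T$ together with the follower dynamics \eqref{eq:32} (so that $\D^m x_i[k]=u_i[k]$) yields $\D\zeta_i[k]=(\bar E\otimes I_p)\zeta_i[k]+(\bar F\otimes I_p)u_i[k]$; subtracting this from the estimator \eqref{eq:47} cancels the $(\bar F\otimes I_p)u_i[k]$ term exactly and leaves a dynamics driven only by the estimation-error correction.

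The key algebraic simplification is that the correction term collapses to a relative estimation error. Since $G=(1,0,\cdots,0)$ picks off the position block, $(G\otimes I_p)\zeta_j[k]=x_j[k]$, so that
\begin{equation*}
(G\otimes I_p)(z_j[k]-z_i[k])-e_{ji}[k]=(G\otimes I_p)(\hat z_j[k]-\hat z_i[k]).
\end{equation*}
Because each leader transmits its exact state, $\hat z_j[k]=\textbf{0}_{mp}$ for $j\in{\cal M}$, so only the follower--follower coupling survives. Stacking over $i\in{\cal N}$ and using the $1/(1+d_i)$ normalization that produces $\hat L_2=(I_N+D)^{-1}L_2$, I obtain
\begin{equation*}
\hat Z[k+1]=(I_N\otimes\tilde E\otimes I_p-\hat L_2\otimes K_eG\otimes I_p)\hat Z[k],
\end{equation*}
with $\tilde E=I_m+\bar E$. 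Since the system matrix is constant, applying $\D^{l_m-m}$ to both sides gives the same recursion for $\D^{l_m-m}\hat Z[k]$, hence $\D^{l_m-m}\hat Z[k]=(I_N\otimes\tilde E\otimes I_p-\hat L_2\otimes K_eG\otimes I_p)^k\D^{l_m-m}\hat Z[0]$.

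It then remains to prove Schur stability of $I_N\otimes\tilde E\otimes I_p-\hat L_2\otimes K_eG\otimes I_p$, which I would do exactly as in Theorem~\ref{thm:dis_1}. A Schur decomposition of $\hat L_2$ reduces the problem to the diagonal blocks $\tilde E-\hat\lambda_iK_eG$. The matrix inequality assumed here is precisely the transpose (dual) of \eqref{eq:5}, and $K_e=\tilde EPG^T(GPG^T)^{-1}$ is the dual of the gain in Theorem~\ref{thm:dis_1}; repeating that computation with transposes interchanged gives
\begin{equation*}
(\tilde E-\hat\lambda_iK_eG)P(\tilde E-\hat\lambda_iK_eG)^H-P<(|1-\hat\lambda_i|^2-\varepsilon^2)\,\tilde EPG^T(GPG^T)^{-1}GP\tilde E^T\le0,
\end{equation*}
so each $\tilde E-\hat\lambda_iK_eG$ is Schur by the discrete Lyapunov theorem. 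Existence of a positive definite $P$ follows from Lemma~5 of \cite{Kristian13Automatica}: because $\tilde E=I_m+\bar E$ is unit upper triangular its only eigenvalue is $1$, so the solvability condition reduces to $\varepsilon<1$; and by the Gerschgorin argument and Lemma~\ref{lem:Laplacian} the eigenvalues of $\hat L_2$ lie in $U(1,1)$, making the interval $(\max_i|1-\hat\lambda_i|,1)$ nonempty. Schur stability then yields constants $M_5<\infty$ and $\beta_5\in(0,1)$ with $\|(I_N\otimes\tilde E\otimes I_p-\hat L_2\otimes K_eG\otimes I_p)^k\|_2\le M_5\beta_5^k$, whence $\|\D^{l_m-m}\hat Z[k]\|_2\le M_5\beta_5^k\|\D^{l_m-m}\hat Z[0]\|_2$ after absorbing the finite constant factor into $M_5$.

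The main obstacle I anticipate is not the stability step --- which is a faithful dual of Theorem~\ref{thm:dis_1} --- but the bookkeeping in the error-dynamics derivation: verifying that the correction term cancels the control input and reduces to the pure relative estimation error, and that the leader contributions vanish so that $L_2$ (rather than the full Laplacian) governs the dynamics. Care is also needed to confirm that the recursion for $\hat Z[k]$ commutes with $\D^{l_m-m}$ and that $\D^{l_m-m}\hat Z[0]$ is a fixed finite vector, so that the exponential bound transfers cleanly to the $(l_m-m)$th difference.
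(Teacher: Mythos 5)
Your proposal follows essentially the same route as the paper's proof: derive the linear error dynamics for $\hat Z[k]$ from \eqref{eq:32} and \eqref{eq:47}, note that the constant system matrix commutes with $\D^{l_m-m}$ so the same recursion governs $\D^{l_m-m}\hat Z[k]$, and establish Schur stability of the system matrix by the dual of the Lyapunov argument in Theorem \ref{thm:dis_1}, yielding the exponential bound. Your version is in fact slightly more careful than the paper's: you carry the $1/(1+d_i)$ normalization through so that $\hat L_2=(I_N+D)^{-1}L_2$ appears in the error dynamics, whereas the paper's displayed recursion writes $L_2$ (an apparent typo, since the statement's condition on $\varepsilon$ is phrased in terms of the eigenvalues $\hat\lambda_i$ of $\hat L_2$), and you also spell out the cancellation of the control input and the vanishing of the leader estimation errors, which the paper leaves implicit.
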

\begin{proof}
	From \eqref{eq:32} and \eqref{eq:47}, it can be obtained that
	\begin{equation}
	\hat Z[k+1]=((I_{N}\otimes \tilde E -L_2\otimes K_eG)\otimes I_p)\hat Z[k],\nonumber
	\end{equation}
	which implies that
	\begin{equation}
	\D^{l_m-m}\hat Z[k+1]=((I_{N}\otimes \tilde E -L_2\otimes K_eG)\otimes I_p)\D^{l_m-m}\hat Z[k].\nonumber
	\end{equation}
	Following the same procedure of the proof of Theorem \ref{thm:dis_1}, it can be easily proved that all eigenvalues of $I_{N}\otimes \tilde E \otimes I_p-L_2\otimes K_eG\otimes I_p$ are inside the unit circle. Hence, there exist two positive constants $M_5<\infty$ and $\beta_5\in(0,1)$ such that $\|\D^{l_m-m}\hat Z[k]\|_2\le \|(I_{N}\otimes \tilde E -L_2\otimes K_eG)^k\|_2\|\D^{l_m-m} \hat Z[0]\|_2\\
	\le M_5\beta_5^k$.
\end{proof}

Replacing $\D^l e_{ji}[k]$ with $z_{jl}[k] - z_{il}[k]$ $(l=1,\cdots,m-1)$ in \eqref{eq:49} obtains the following modified containment algorithm:

{\begin{multline}\label{eq:6}
	u_i[k]=\frac{1}{1+d_i}\sum_{j\in{\cal M\cup N}}\alpha_{ij}\left(\sum_{l=m-1}^{l_m-1}\kappa_l\D^{m-l-1}e_{ji}[k]\right.,\\
	\left.+\sum_{l=0}^{m-2}\kappa_{m-l-1}(z_{j(m-l)}[k]-z_{i(m-l)}[k])\right).
	\end{multline}
	}

\
\begin{thm}\label{thm:dis_3}
	Assume all leaders move along their polynomial trajectories described by \eqref{eq:30}; and all followers are described by the high-order-integral dynamics \eqref{eq:32}.
	The containment problem of MASs can be solved by \eqref{eq:6} with $K=(\kappa_{l_m-1},\cdots,\kappa_0)=(F^TPF)^{-1}F^TP\hat E$, where $P$ is the solution to the following matrix inequality
	\begin{equation}
	P>\hat E^TP\hat E-(1-\varepsilon^2)\hat E^TPF(F^TPF)^{-1}F^TP\hat E, \nonumber
	\end{equation}
	where $\varepsilon\in(\max_{i\in\{1,2,\cdots,N\}}|1-\hat \lambda_i|,1)$.
\end{thm}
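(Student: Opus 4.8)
The plan is to mirror the proof of Theorem~\ref{thm:con_3}, translating the continuous variation-of-constants argument into its discrete convolution-sum analog. First I would substitute the estimator-based algorithm~\eqref{eq:6} into the follower dynamics~\eqref{eq:32} and decompose each estimated state as $z_{il}[k]=\zeta_{il}[k]+\hat z_{il}[k]$, where $\zeta_{il}[k]=\D^{l-1}x_i[k]$ is the true value and $\hat z_{il}[k]$ is the estimation error governed by Lemma~\ref{lem:2}. The terms $z_{j(m-l)}[k]-z_{i(m-l)}[k]$ in~\eqref{eq:6} differ from the exact differential terms $\D^{m-l-1}e_{ji}[k]$ used in~\eqref{eq:49} precisely by these estimation errors. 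Hence, writing $\hat\Xi[k]$ as in the proof of Theorem~\ref{thm:dis_2}, the closed loop takes the perturbed form
\begin{equation}\label{eq:dis3_recursion}
\hat\Xi[k+1]=R_3\hat\Xi[k]+R_6\,\D^{l_m-m}\hat Z[k],
\end{equation}
where $R_3=I_N\otimes\hat E\otimes I_p-\hat L_2\otimes FK\otimes I_p$ is the Schur-stable matrix from Theorem~\ref{thm:dis_2} and $R_6=\hat L_2\otimes FK_2\otimes I_p$ with $K_2=(0,\kappa_{m-2},\dots,\kappa_0)$ collecting the coefficients attached to the estimator-supplied differences.

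Next I would solve~\eqref{eq:dis3_recursion} by the discrete superposition formula
\begin{equation}
\hat\Xi[k]=R_3^k\hat\Xi[0]+\sum_{i=0}^{k-1}R_3^{k-i-1}R_6\,\D^{l_m-m}\hat Z[i].\nonumber
\end{equation}
By Theorem~\ref{thm:dis_2} there exist $M_4<\infty$ and $\beta_4\in(0,1)$ with $\|R_3^k\|_2\le M_4\beta_4^k$, so the homogeneous term $R_3^k\hat\Xi[0]\to\textbf{0}$. For the convolution term, Lemma~\ref{lem:2} supplies $\|\D^{l_m-m}\hat Z[i]\|_2\le M_5\beta_5^i$ with $\beta_5\in(0,1)$, whence
\begin{equation}
\left\|\sum_{i=0}^{k-1}R_3^{k-i-1}R_6\,\D^{l_m-m}\hat Z[i]\right\|_2\le M_4M_5\|R_6\|_2\sum_{i=0}^{k-1}\beta_4^{k-i-1}\beta_5^i.\nonumber
\end{equation}

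Finally I would evaluate the geometric sum exactly as in Theorem~\ref{thm:con_3}: when $\beta_4\ne\beta_5$ it equals $(\beta_4^k-\beta_5^k)/(\beta_4-\beta_5)$, and when $\beta_4=\beta_5$ it equals $k\beta_4^{k-1}$; in either case it vanishes as $k\to\infty$ because $\beta_4,\beta_5\in(0,1)$. Therefore $\lim_{k\to\infty}\|\hat\Xi[k]\|_2=0$, and Lemma~\ref{lem:Laplacian} then yields that the containment problem is solved. The only genuine obstacle is the bookkeeping: correctly identifying $R_6$ and its coefficient block $K_2$ so that the estimation error enters exactly as the forcing term in~\eqref{eq:dis3_recursion}, and checking that the Schur-stability bound of Theorem~\ref{thm:dis_2} and the estimator-decay bound of Lemma~\ref{lem:2} apply verbatim here. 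Once the recursion is cast in this form, the convergence argument is the routine discrete counterpart of the integral estimate used in Theorem~\ref{thm:con_3}.
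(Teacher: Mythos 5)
Your proposal matches the paper's own proof essentially step for step: the same perturbed recursion $\hat\Xi_F[k+1]=R_5\hat\Xi_F[k]-R_6\,\D^{l_m-m}\hat Z[k]$ (you call $R_5$ by the name $R_3$ and flip a sign, which is immaterial after taking norms), the same discrete superposition formula, the same Schur-stability bound from Theorem~\ref{thm:dis_2} and decay bound from Lemma~\ref{lem:2}, and the identical geometric-sum case analysis for $\beta_4\neq\beta_5$ versus $\beta_4=\beta_5$. The argument is correct and complete, so no further comparison is needed.
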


\color{black}

\begin{proof}
	By applying the containment algorithm defined by \eqref{eq:40}, the closed-loop dynamics of the MAS can be rewritten in the following compact form
	\begin{equation}\label{eq:41}
	\begin{cases}
	\D\Xi_L[k]=&\!\!\!\!(I_M\otimes E\otimes I_p)\Xi_L[k], \\
	\D\Xi_F[k]=&\!\!\!\!(I_N\otimes E\otimes I_p-\hat L_2\otimes FK\otimes I_p)\Xi_F[k]\\
	&-((I_N+D)^{-1}L_1\otimes FK\otimes I_p)\Xi_L[k]\\
	&-(\hat L_2\otimes FK_2\otimes I_p)\D^{l_m-m}\hat Z[k],
	\end{cases}
	\end{equation}
	where $K_2=(0,\kappa_{m-2},\cdots,\kappa_0)$; $\Xi_L[k]$ and $\Xi_F[k]$ are defined in Theorem \ref{thm:dis_2}.
	Let $\hat \Xi_F[k]=\Xi_F[k]+(L_2^{-1}L_1\otimes I_{l_mp}) \Xi_L[k]$. Then it is derived from \eqref{eq:41} that
	\begin{equation}\label{eq:42}
	{\hat \Xi}_F[k+1]=R_5\hat\Xi_F[k]-R_6\D^{l_m-m}\hat Z[k],
	\end{equation}
	where $R_5=I_N\otimes \hat E\otimes I_p-\hat L_2\otimes FK\otimes I_p$ and $R_6=\hat L_2\otimes FK_2\otimes I_p$.
	
	It follows from \eqref{eq:42} that $\hat \Xi_F[k]=R_5^k\Xi_F[0]+\sum_{l=0}^{k-1}R_5^{k-l-1}R_6\D^{l_m-m}\hat Z[l]$.
	This together with the proof of Theorem \ref{thm:dis_2} and Lemma \ref{lem:2} implies that
	\begin{multline}
	\|\hat \Xi_F[k]\|_2\le
	\sum_{l=0}^{k-1}\|R_5^{k-l-1}\|_2\|R_6\|_2\|\D^{l_m-m}\hat Z[l]\|_2\\ + \|R_5^k\|_2\|\Xi_F[0]\|_2
	\le\Sigma_1[k]+\Sigma_2[k],
	\end{multline}
	where $\Sigma_1[k]=M_4\beta_4^k\|\Xi_F[0]\|_2$ and $\Sigma_2[k]=M_4M_5\|R_6\|_2\sum_{l=0}^{k-1}\beta_4^{k-l-1}\beta_5^{l}$.
	It is calculated that $\lim_{k\to\infty}\Sigma_1[k]=0$ and
	\begin{equation}
	\lim_{k\to\infty}\Sigma_2[k]=
	\begin{cases}
	M_4M_5\|R_6\|_2\lim\limits_{k\to\infty}\frac{\beta_5^k-\beta_4^k}{\beta_5-\beta_4}=0,& \text{if}\ \beta_4\neq \beta_5\\
	M_4M_5\|R_6\|_2\lim\limits_{k\to\infty}k\beta_4^{k-1}=0,& \text{if}\ \beta_4=\beta_5
	\end{cases}.\nonumber
	\end{equation}
	Hence, $\lim_{k\to\infty}\|\hat \Xi_F[k]\|_2=0$. By Lemma \ref{lem:Laplacian}, the containment problem is solved.
\end{proof}

\subsection{Applications: Coordinated Control of a Group of Mobile Robots}\label{sec:app}
In order to demonstrate the practical value of the proposed algorithm, this section provides an application example: the coordinated control of a group of mobile robots.

\
Consider the scenario shown in Fig. \ref{fig:back}, where a group of mobile robots are required to move across a partially unknown area through a narrow safe tunnel.
There are three master robots and three slave robots.
As claimed in Introduction Section, the master robots are capable of self-navigation; and the slave robots can measure the relative positions with neighbor robots.

\color{black}
Each robot is a differential drive mobile robot. The schematic diagram of the $i$th mobile robot is given in Fig. \ref{fig:vehicle}.
The coordinate of the center point between two driving wheels is denoted by $(x_i,y_i)$.
The coordinate of the center of the $i$th mobile robot is denoted by $(x_i^c,y_i^c)$.
The $i$th mobile robot's kinematics is described by
\begin{equation}\label{eq:15}
  \begin{cases}
    \dot x_i(t)=v_i(t)\cos(\theta_i)\\
    \dot y_i(t)=v_i(t)\sin(\theta_i)\\
    \dot \theta_i(t)=\omega_i(t)
  \end{cases},
\end{equation}
where $\theta_i$ is the orientation of the $i$th mobile robot with respect to the horizontal axis; $v_i$ and $\omega_i$ are the linear velocity and the angular velocity of the $i$th robot, respectively.

\begin{figure}
  \centering
  \includegraphics[width=0.9\hsize]{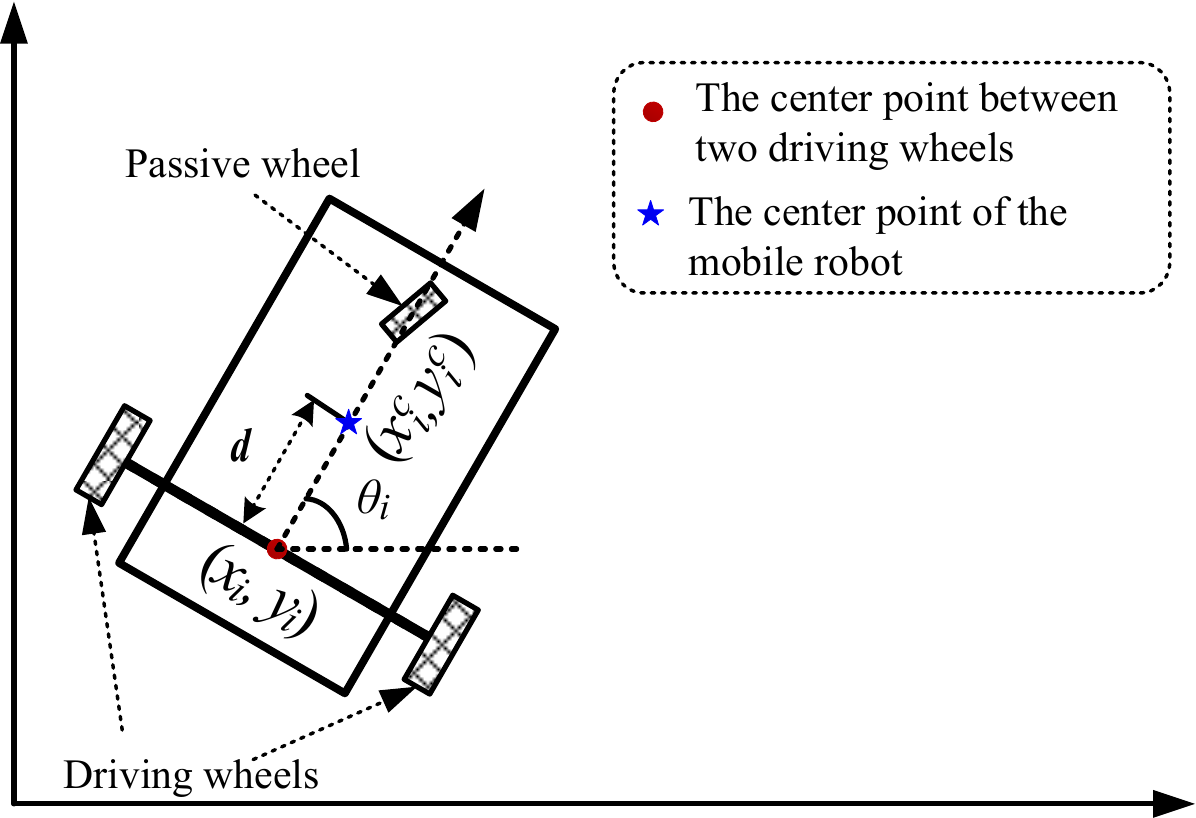}\\
  \caption{The schematic diagram of the $i$th nonholonomic mobile robot.}\label{fig:vehicle}
\end{figure}

Unfortunately, the proposed algorithm cannot be directly applied to this kind of mobile robots due to its nonlinear kinematics.
\
To deal with this challenge, we reformulate \eqref{eq:15} at the point $(x_i^c,y_i^c)$ by the feedback linearization \cite{Ren08RAS}:
\begin{equation}
\dot \varphi_i(t)=u_i(t),\nonumber
\end{equation}
where $\varphi_i(t)=(x_i^c(t),y_i^c(t))^T$, $u_i(t)=(u_i^x(t),u_i^y(t))$,  $u_i^x(t)=v_i(t)\cos(\theta_i)-d\omega_i(t)\sin(\theta_i)$ and $u_i^y(t)=v_i(t)\sin(\theta_i)+d\omega_i(t)\cos(\theta_i)$.
In the rest of this paper, the mobile robot is simply denoted by the coordination of its center point (e.g. the $i$th robot is denoted by $\varphi_i(t)$).
\color{black}

\begin{remark}
According to the above analysis, the control inputs of the $i$th mobile robot are $v_i(t)$ and $w_i(t)$, which can be easily obtained from $u_i(t)$ by using the following transformation
  \begin{equation}
    \begin{pmatrix}
      v_i(t)\\\omega_i(t)
    \end{pmatrix}=
    \begin{pmatrix}
      \cos(\theta_i) & \sin(\theta_i)\\
      -\frac{1}{d}\sin(\theta_i) & \frac{1}{d}\cos(\theta_i)
    \end{pmatrix}u_i(t).\nonumber
  \end{equation}
\end{remark}

Due to the  wide use of digital devices, it is assumed that only the sampled data at each sampling instant is available.
Assume that the sampling period is  {$T=1$}.
For any signal $s(t)$, the sampled data $s(kT)$ at the $k$th sampling instance is denoted by $s[k]$.
By adopting the zero-order holder strategy, the $i$th robot's behavior is described by the following discrete-time difference equation
\begin{equation}\label{eq:13}
  \varphi_i[k+1]=\varphi_i[k]+u_i[k].
\end{equation}
\
Moreover, in this application, it is assumed that the robot's dynamics \eqref{eq:13} is disturbed by the polynomial disturbance $\delta_i[k]=\textbf{1}_2\otimes(1+0.2k-0.01k^2+0.001k^3)$.
The corresponding dynamics of the $i$th robot can therefore be written as
\begin{equation}
\varphi_i[k+1]=\varphi_i[k]+u_i[k]+\delta_i[k].\nonumber
\end{equation}

The task shown in Fig. \ref{fig:back} can be accomplished by adopting the control strategy introduced in Introduction Section.
For each master robot, we select six reference points inside the safe tunnel.
The coordinates of these reference points are shown in Table \ref{tab:r1}.
By the polynomial interpolation, we can obtain the reference trajectory which goes though these points.
The trajectory of the $i$th master robot  is
\begin{equation}\label{eq:r3}
\varphi_i[k]={\bf a}_5^ik^5+{\bf a}_4^ik^4+{\bf a}_3^ik^3+{\bf a}_2^ik^2+{\bf a}_1^ik+{\bf a}_0^i.
\end{equation}
The coefficients ${\bf a}_j^i$ in \eqref{eq:r3} are given in Table \ref{tab:r2}.
\begin{table}
	\centering\caption{The coordinate values of all refrence points.}\label{tab:r1}
	\begin{tabular}{c|c c c}
		\hline\hline
		\backslashbox[1cm]{Time ($k$)}{Leader} & 1 & 2 & 3\\ \hline
		0 & $(0,25)$ & $(20,-5)$ & $(-10,-20)$ \\
		30 & $(110,8)$ & $(130,-15)$ & $(100,-35)$ \\
		60 & $(200,50)$ & $(230,35)$ & $(210,0)$ \\
		90 & $(300,120)$ & $(335,100)$  & $(315,70)$\\
		120 & $(405,155)$ & $(440,130)$ & $(410,110)$\\
		150 & $(475,150)$ & $(510,130)$ & $(480,110)$\\
		\hline\hline
	\end{tabular}
\end{table}
Because the master robots are capable of self-navigation, the master robots can autonomously move along their reference trajectories.

\begin{table}
	\centering
	\caption{Values of coefficients $\bf{a}^i_j$ in \eqref{eq:r3}.}\label{tab:r2}
	\begin{tabular}{c|ccc}
		\hline\hline
		Leader & 1 & 2 & 3\\\hline
		$a_0^i$ & $\begin{pmatrix} 0\\25 \end{pmatrix}$ & $\begin{pmatrix} 20.00\\-5.000 \end{pmatrix}$ & $\begin{pmatrix}-10.00 \\-20.00 \end{pmatrix}$ \\
		$a_1^i$ & $\begin{pmatrix} 4.625\\-1.028 \end{pmatrix}$ & $\begin{pmatrix} 4.100\\-1.392 \end{pmatrix}$ & $\begin{pmatrix} 3.544\\-0.3833 \end{pmatrix}$\\
		$a_2^i\times 10^2$ & $\begin{pmatrix} -4.560 \\-0.7963\end{pmatrix}$ &  $\begin{pmatrix} -1.944\\ 2.801\end{pmatrix}$  & $\begin{pmatrix} 0.7407\\-3.796 \end{pmatrix}$\\
		$a_3^i\times 10^4$ & $\begin{pmatrix}5.092 \\10.77 \end{pmatrix}$ & $\begin{pmatrix} 1.698 \\4.167 \end{pmatrix}$ & $\begin{pmatrix} -1.389\\15.05 \end{pmatrix}$ \\
		$a_4^i\times 10^6$ & $\begin{pmatrix} 1.800 \\-10.91 \end{pmatrix}$ & $\begin{pmatrix} 0\\-6.430 \end{pmatrix}$ & $\begin{pmatrix} 1.029\\-1.337 \end{pmatrix}$\\
		$a_5^i\times 10^8$ & $\begin{pmatrix} 0\\3.086 \end{pmatrix}$ & $\begin{pmatrix} -0.3429\\2.058 \end{pmatrix}$ & $\begin{pmatrix} -0.3429\\3.601 \end{pmatrix}$\\\hline\hline
	\end{tabular}
\end{table}		

\begin{figure}
	\centering
	\includegraphics[width=0.7\hsize]{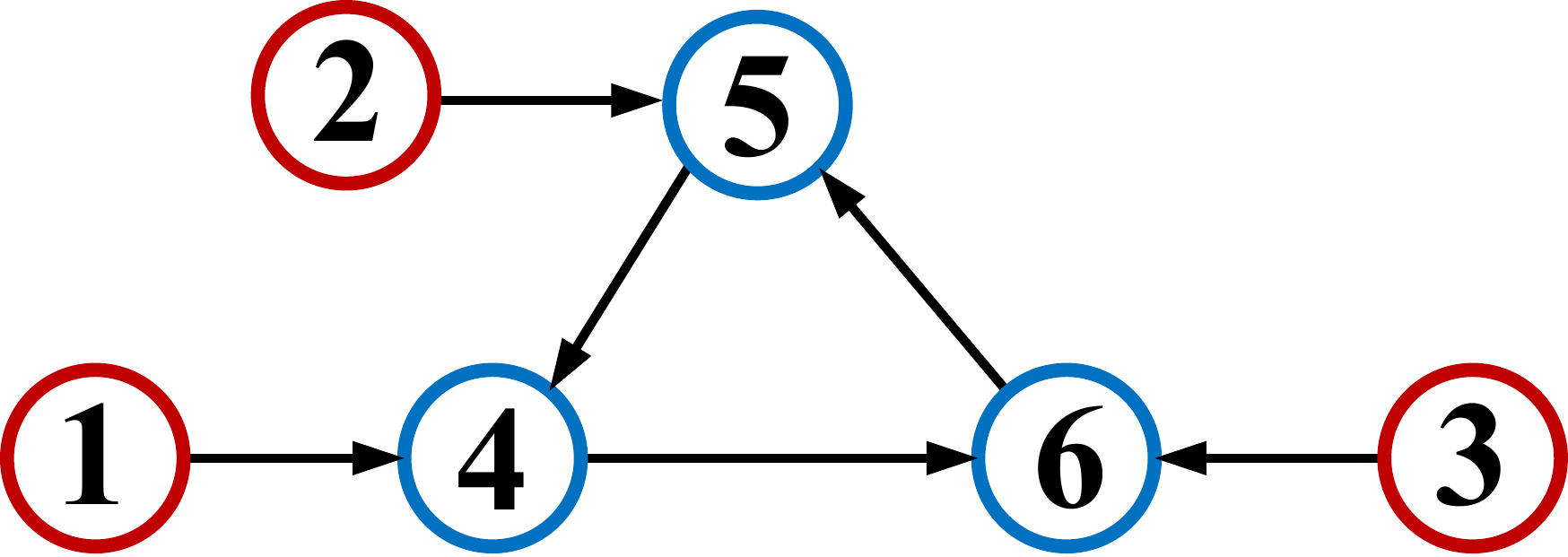}\\
	\caption{Interaction topology: nodes 1, 2 and 3 denote master robots; nodes 4, 5 and 6 denote slave robots.}\label{fig:app_top}
\end{figure}
The communication topology of the multi-robot system is shown in Fig. \ref{fig:app_top}.
If there is a directed edge from node $j$ to node $i$, then the $i$th robot can measure the relative position $e_{ji}[k]$ between itself and robot $j$.
By Section \ref{sec:noise}, the control input of the $(i-3)$th slave robot ($i=4,5,6$) is designed as:
\begin{equation}\label{eq:22}
	u_i[k]=\frac{1}{(1+d_i)}\sum_{l=0}^{5}\kappa_l\D^{-l}\sum_{j=1}^{6}\alpha_{ij}(e_{ji}[k]+\rho_{ji}\eta_{ji}[k]),
\end{equation}
where $\rho_{ji}\eta_{ji}[k]$ is the measurement noise; $\alpha_{ij}=1$ if there is a directed edge from robot $j$ to robot $i$, otherwise $\alpha_{ij}=0$.
By Theorem \ref{thm:dis_1}, the parameters $(k_6,k_5,\cdots,k_0)$ are set to be $(1.806, 0.4769, 0.0786, 0.0085, 5.660\times 10^{-4},1.826\times 10^{-5})$.

In order to verify the effectiveness of the algorithm \eqref{eq:22}, a simulation is carried out.
The simulation results are shown in Fig. \ref{fig:r2}.
Despite of the existence of the disturbance and the measurement noise, all slave robots are convergent into the convex hull spanned by the master robots and move along with them.
\begin{figure}`
	\centering
	\includegraphics[width=0.5\textwidth]{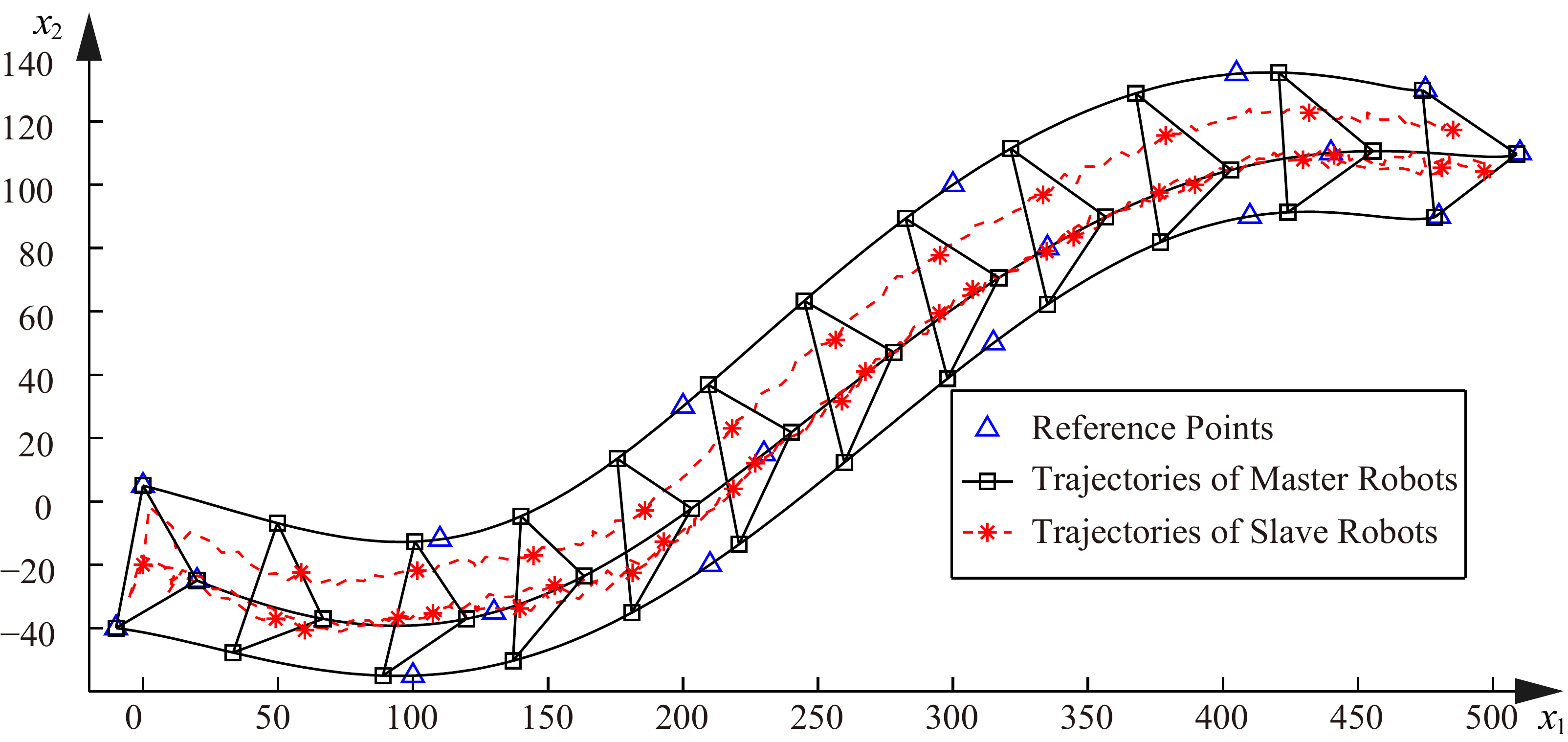}
	\caption{Moving profiles of all robots.}\label{fig:r2}
\end{figure}

\color{black}
\section{Conclusions}\label{sec:conclude}
A $PI^n$-type containment algorithm is proposed for solving the containment problem of MASs in both the continuous-time domain and the discrete-time domain. Leaders in the MAS are assumed to be the polynomial trajectories.
Followers are described by the single-integrator dynamics and the high-order integral dynamics. It is proved that the proposed algorithm can solve the containment problem if for each follower, there is at least one leader which has a directed path to this follower. Compared to the previous results, (1) the containment problem is studied not only in the continuous-time domain but also in the discrete-time domain; (2) the proposed algorithm can solve the containment problem with dynamic leaders even if followers are described by the single-integrator dynamics; (3) there is no non-smooth ``sign'' function in the proposed algorithm; and  {(4) effects of both the disturbance and the measurement noise are taken into account}.
A potential application, the containment control of networked multiple mobile robots, is presented to demonstrate the practical value of the proposed algorithm. 
%

\
\appendices
\section{}
\begin{lemma}\label{lem:r1}
	For any sequence of numbers $\{x[k]|k=0,1,2,\cdots\}$, the following formula holds
	\begin{equation}\label{eq:r2}
	\D^n x[k]=\sum_{i=0}^n(-1)^i\binom{n}{i}x[k+n-i],\ n=1,2,\cdots.
	\end{equation}
\end{lemma}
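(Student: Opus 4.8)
The plan is to prove the identity by induction on $n$, using only the definition of the difference operator $\D x[k]=x[k+1]-x[k]$ together with Pascal's rule $\binom{n}{j}+\binom{n}{j-1}=\binom{n+1}{j}$. First I would check the base case $n=1$, where the right-hand side is $\binom{1}{0}x[k+1]-\binom{1}{1}x[k]=x[k+1]-x[k]$, which is exactly $\D x[k]$.

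For the inductive step, assuming the claim for some $n\ge 1$, I would write $\D^{n+1}x[k]=\D\big(\D^n x[k]\big)=\D^n x[k+1]-\D^n x[k]$ and substitute the inductive hypothesis into each term to obtain
\begin{equation*}
\D^{n+1}x[k]=\sum_{i=0}^n(-1)^i\binom{n}{i}x[k+n+1-i]-\sum_{i=0}^n(-1)^i\binom{n}{i}x[k+n-i].
\end{equation*}

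The one place requiring any care is aligning the two sums. I would reindex the second sum via $j=i+1$; after absorbing the subtraction sign (note that $-(-1)^{j-1}=(-1)^j$), it becomes $\sum_{j=1}^{n+1}(-1)^j\binom{n}{j-1}x[k+n+1-j]$, now keyed to the same shift $x[k+n+1-j]$ as the first sum. Merging term by term, the $j=0$ and $j=n+1$ boundary terms supply the coefficients $\binom{n+1}{0}$ and $(-1)^{n+1}\binom{n+1}{n+1}$, while for $1\le j\le n$ Pascal's rule turns $\binom{n}{j}+\binom{n}{j-1}$ into $\binom{n+1}{j}$. This collapses the expression to $\sum_{j=0}^{n+1}(-1)^j\binom{n+1}{j}x[k+(n+1)-j]$, which is the claim for $n+1$ and closes the induction.

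I expect no genuine obstacle here: this is the classical finite-difference (Newton forward-difference) expansion, and the entire content is the index shift plus a single application of Pascal's identity. The only pitfalls are purely clerical---the sign flip coming from $(-1)^{j-1}$ and getting the summation limits right after reindexing---so I would simply double-check those two points rather than anticipate any conceptual difficulty.
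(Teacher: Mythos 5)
Your proof is correct and follows essentially the same route as the paper: induction on $n$, expanding $\D^{n+1}x[k]$ via the inductive hypothesis, reindexing the shifted sum, and collapsing it with Pascal's rule $\binom{n}{j}+\binom{n}{j-1}=\binom{n+1}{j}$, with the boundary terms handled separately. The only cosmetic difference is that you form $\D^n x[k+1]-\D^n x[k]$ before substituting, while the paper pushes $\D$ inside the sum first; the resulting algebra is identical (and your sign bookkeeping is cleaner than the paper's, which has a stray $(-1)^{i+1}$ in an intermediate line).
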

\begin{proof}
	If $n=1$, the correctness of \eqref{eq:r2} can be easily verified.
	Assume when $n=m$, \eqref{eq:r2} is correct.
	It can be calculated that
	\begin{align*}
	&\D^{m+1}x[k]=\D(\D^m x[k])=\D\sum_{i=0}^m(-1)^i\binom{m}{i}x[k+m-i]\\
	=&\sum_{i=0}^m(-1)^i\binom{m}{i}\left(x[k+m-i+1]-x[k+m-i]\right)\\
	=&\sum_{i=0}^{m}(-1)^{i}\binom{m}{i}x[k+m-i+1]\\
	&+\sum_{i=1}^{m+1}(-1)^{i}\binom{m}{i-1}x[k+m-i+1]\\
	=&\sum_{i=1}^{m}(-1)^{i+1}\left(\binom{m}{i}+\binom{m}{i-1}\right)x[k+m-i+1]\\
	&+x[k+m+1]+(-1)^{m+1}x[k]\\
	=&\sum_{i=1}^{m}(-1)^{i+1}\binom{m+1}{i}x[k+m-i+1]\\
	&+x[k+m+1]+(-1)^{m+1}x[k]\\
	=&\sum_{i=0}^{m+1}(-1)^i\binom{m+1}{i}x[k+m+1-i].
	\end{align*}
	By the mathematical induction, it is proved that \eqref{eq:r2} holds.
\end{proof}

\begin{lemma}\label{lem:r2}
	Define a random vector $\nu[k]=(\eta[k],\D\eta[k],\cdots,\D^n\eta[k])^T$, where $\eta[k]$ is the standard white noise.
	For $\forall m\in\{0,1,2\cdots,n\}$, there exists a finite positive constant $G$ such that
	\begin{equation*}
	\|\E(\nu[k]\nu^T[k+m])\|_F<G.
	\end{equation*}
	For $\forall m\ge n+1$, the following equation holds
	\begin{equation*}
	\E(\nu[k]\nu^T[k+m])=\textbf{0}_{(n+1)\times (n+1)}.
	\end{equation*}
\end{lemma}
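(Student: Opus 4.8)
The plan is to analyze the random vector $\nu[k]=(\eta[k],\D\eta[k],\cdots,\D^n\eta[k])^T$ by first expressing each component $\D^s\eta[k]$ as an explicit finite linear combination of the underlying white-noise samples. By Lemma \ref{lem:r1}, we have $\D^s\eta[k]=\sum_{i=0}^s(-1)^i\binom{s}{i}\eta[k+s-i]$, so the $s$-th entry of $\nu[k]$ is a weighted sum of $\eta[k],\eta[k+1],\cdots,\eta[k+s]$. The key structural observation is that the entries of $\nu[k]$ only involve white-noise samples with time indices in the window $\{k,k+1,\cdots,k+n\}$. This gives a clean finite-support description that drives both claims.

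For the first claim (finiteness when $0\le m\le n$), I would compute the $(s,t)$ entry of $\E(\nu[k]\nu^T[k+m])$ as $\E\big(\D^s\eta[k]\,\D^t\eta[k+m]\big)$. Expanding both factors via Lemma \ref{lem:r1} and using the standard-white-noise property $\E(\eta[a]\eta[b])=\delta_{ab}$ (the Kronecker delta, with independent components across dimensions contributing an identity factor), each entry collapses to a finite sum of products of binomial coefficients. Since $s,t\le n$ are fixed and the number of surviving cross terms is bounded by $(n+1)^2$, each entry is finite; hence the Frobenius norm of the $(n+1)\times(n+1)$ matrix is bounded by some finite constant $G$ depending only on $n$. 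I would not grind through the exact value of $G$ — it suffices to exhibit a crude bound such as $G=(n+1)\,4^n$ coming from $\sum_i\binom{s}{i}\le 2^s\le 2^n$ on each factor.

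For the second claim (exact vanishing when $m\ge n+1$), the argument is cleaner and rests entirely on disjoint support. The samples appearing in $\nu[k]$ lie in the index window $\{k,\cdots,k+n\}$, while those appearing in $\nu[k+m]$ lie in $\{k+m,\cdots,k+m+n\}$. When $m\ge n+1$, the smallest index in the second window, $k+m\ge k+n+1$, strictly exceeds the largest index $k+n$ in the first window, so the two windows are disjoint. By mutual independence of distinct white-noise samples and $\E(\eta[\cdot])=\mathbf{0}_p$, every cross-expectation $\E(\eta[a]\eta[b])$ with $a$ in the first window and $b$ in the second vanishes, forcing $\E(\nu[k]\nu^T[k+m])=\textbf{0}_{(n+1)\times(n+1)}$.

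The main obstacle, if any, is purely bookkeeping: carefully tracking the index ranges produced by the two difference expansions to confirm the support windows and to verify that the delta functions pick out exactly the overlapping indices. Once the finite-support picture from Lemma \ref{lem:r1} is in hand, both parts follow from independence and the finiteness of fixed binomial sums, so I expect no genuine analytical difficulty.
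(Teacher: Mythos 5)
Your proposal is correct and takes essentially the same route as the paper's proof: both expand each entry $\D^{i}\eta[k]\,\D^{j}\eta[k+m]$ via Lemma \ref{lem:r1} into a finite window of white-noise samples, bound each expectation by the delta-correlation property to get the finite constant for $m\le n$, and use disjointness of the index windows $\{k,\cdots,k+n\}$ and $\{k+m,\cdots,k+m+n\}$ to conclude exact vanishing for $m\ge n+1$. The only cosmetic difference is that you exhibit an explicit crude constant such as $(n+1)4^n$, whereas the paper merely asserts the existence of a finite bound $G_1$ per entry.
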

\begin{proof}
	The $i$th row and the $j$th column entry of $\nu[k]\nu^T[k+m]$ is $\D^{i-1}\eta[k]\D^{j-1}\eta[k+m]$.
	By Lemma \ref{lem:r1}, we know $\D^{i-1}\eta[k]\D^{j-1}\eta[k+m]$ is a linear combination of the following terms
	\begin{equation*}
	\eta[k+i-1-s]\eta[k+m+j-1-t],
	\end{equation*}
	where $s\in\{0,1,\cdots,i-1\}$ and $t\in\{0,1,\cdots,j-1\}$.
	
	If $m\in\{0,1,2\cdots,n-1\}$, then $\forall s\in\{0,1,\cdots,i-1\}$ and $\forall t\in\{0,1,\cdots,j-1\}$,
	\begin{multline*}
	|\E(\eta[k+i-1-s]\eta[k+m+j-1-t])|_2=\\
	\begin{cases}
	0,\ m+j-i+s-t\neq 0 \\
	1,\ m+j-i+s-t= 0
	\end{cases}.
	\end{multline*}
	Hence, there must exist a finite positive constant $G_1$ such that
	\begin{equation*}
	|\E(\D^{i-1}\eta[k]\D^{j-1}\eta[k+m])|_2<G_1,
	\end{equation*}
	which follows that
	\begin{equation*}
	\|\E(\nu[k]\nu^T[k+m])\|_F<\sqrt{(n+1)^2G_1^2}=(n+1)G_1.
	\end{equation*}
	Let $G=(n+1)G_1<\infty$. Then $\|\E(\nu[k]\nu^T[k+m])\|_2<G$.
	
	\hspace{0.25cm} If $m\ge n+1$, then $k+m+j-1-t\ge k+n+1+j-1-(j-1)=k+n+1$ and $k+i-1-s\le k+n+1-1-0=k+n$. Hence, $\E(\eta[k+i-1-s]\eta[k+m+j-1-t])=0$, which leads to that $\E(\D^{i-1}\eta[k]\D^{j-1}\eta[k+m])=0$.
	Therefore $\E(\nu[k]\nu^T[k+m])=\textbf{0}_{(n+1)\times (n+1)}$.
\end{proof}

\begin{lemma}\label{lem:r3}
	The standard white noise $\eta[k]$ has the following property:
	\begin{equation}
	\E(\D^l\eta[k])=0,\quad l=0,1,2,\cdots.
	\end{equation}
\end{lemma}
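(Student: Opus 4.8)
The plan is to reduce everything to the defining property of standard white noise, namely that $\E(\eta[k])=0$ for every $k$, and then exploit the linearity of expectation over finite sums. The cleanest route is to invoke Lemma \ref{lem:r1}, which already expresses the $l$th difference as the finite linear combination $\D^l\eta[k]=\sum_{i=0}^l(-1)^i\binom{l}{i}\eta[k+l-i]$. Once this explicit form is in hand, the result follows in a single line.

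Concretely, I would first record the base case $l=0$: by definition of the standard white noise, $\E(\D^0\eta[k])=\E(\eta[k])=0$. For the general case $l\ge 1$, I would apply Lemma \ref{lem:r1} and then push the expectation through the (finite) sum using linearity, obtaining
\begin{equation}
\E(\D^l\eta[k])=\sum_{i=0}^l(-1)^i\binom{l}{i}\,\E(\eta[k+l-i])=\sum_{i=0}^l(-1)^i\binom{l}{i}\cdot 0=0.\nonumber
\end{equation}
An equally valid alternative, which avoids citing Lemma \ref{lem:r1}, is a short induction on $l$: assuming $\E(\D^l\eta[k])=0$ for all $k$, one writes $\E(\D^{l+1}\eta[k])=\E(\D^l\eta[k+1])-\E(\D^l\eta[k])=0-0=0$, where the first equality is just the definition $\D x[k]=x[k+1]-x[k]$ applied to the sequence $\D^l\eta[\cdot]$.

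There is essentially no hard part here. The only point that merits a sentence of care is the interchange of $\E(\cdot)$ with the summation, but since the sum is finite this is nothing more than linearity of expectation and requires no integrability hypothesis beyond the standing assumption that each $\eta[k]$ has a well-defined (zero) mean. I would therefore present the Lemma \ref{lem:r1} argument as the main proof and relegate the inductive variant to a remark, since the former makes the zero-mean cancellation most transparent.
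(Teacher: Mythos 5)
Your proposal is correct. Your primary argument — expanding $\D^l\eta[k]$ via Lemma \ref{lem:r1} into the finite sum $\sum_{i=0}^l(-1)^i\binom{l}{i}\eta[k+l-i]$ and pushing $\E(\cdot)$ through by linearity — is a genuinely different route from the paper's, which is precisely the short induction you relegate to a remark: the paper takes $\E(\D^0\eta[k])=\E(\eta[k])=0$ as the base case and computes $\E(\D^{j+1}\eta[k])=\E(\D^j\eta[k+1])-\E(\D^j\eta[k])=0$. Both arguments are valid and essentially one line each. The inductive route is self-contained, needing only the recursive definition $\D x[k]=x[k+1]-x[k]$ and not Lemma \ref{lem:r1}; your route makes the cancellation mechanism fully explicit (every summand has zero mean) and is stylistically consistent with the paper's own proof of Lemma \ref{lem:r2}, which likewise invokes Lemma \ref{lem:r1} to expand differences of the white noise. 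Your one point of care — interchanging expectation with the sum — is correctly dispatched: the sum is finite, so plain linearity of expectation suffices and no integrability issue arises. In effect you have supplied both proofs; the paper simply chose the one you treated as secondary.
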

\begin{proof}
	It is easy to see $\E(\D^0\eta[k])=E(\eta[k])=0$.
	Assume that $\E(\D^j\eta[k])=0$. Then it can be obtained that
	\begin{multline*}
	\E(\D^{j+1}\eta[k])=\E(\D^j\eta[k+1]-\D^j\eta[k])\\
	=\E(\D^j\eta[k+1])-\E(\D^j\eta[k])=0.
	\end{multline*}
	By the mathematic induction, it is proved that $\E(\D^l\eta[k])=0, l=0,1,2,\cdots$.
\end{proof}

\color{black}

\IEEEpeerreviewmaketitle

\ifCLASSOPTIONcaptionsoff
  \newpage
\fi

\balance

\bibliographystyle{IEEEtran}
\bibliography{reference}

\end{document}